\RequirePackage{amsmath}
\documentclass[runningheads,envcountsame,envcountsect]{llncs}

\usepackage[T1]{fontenc}
\usepackage{amsfonts,graphicx,hyperref,color,listings,enumerate,todonotes}
\usepackage[capitalise]{cleveref}

\urlstyle{rm}

\lstset{language=Python,columns=fullflexible}
\lstset{emph={function,language,morphism,append,word,add,prefix,suffix,prefixes,interpretations,intersection,empty,end,do,alphabet},emphstyle={\bfseries}}

\newcommand*{\nn}{\mathbb{N}}

\newcommand*{\from}{\colon} 
  
\newcommand*{\emptyw}{\varepsilon}
\newcommand*{\card}{\#}

\newcommand*{\A}{\mathcal{A}}
\newcommand*{\B}{\mathcal{B}}
\newcommand*{\lan}{\mathcal{L}}

\newcommand{\alphabet}{{\ensuremath{\mathrm{alph}}}}

\newcommand*{\weak}{\mathsf{w}}
\newcommand*{\unbound}{\mathsf{u}}
\newcommand*{\strong}{\mathsf{s}}

\newcommand*{\repets}{\mathrm{\Omega}}

\newcommand*{\maxlen}[1]{\lceil#1\rceil}
\newcommand*{\minlen}[1]{\lfloor#1\rfloor}

\begin{document}

\renewcommand{\thelstlisting}{\arabic{lstlisting}}
 
\title{Circularity and repetitiveness in non-injective DF0L systems\thanks{The first author was supported by the CTU Global Postdoc Fellowship program.}}

\titlerunning{Circularity and repetitiveness}
 
\author{Herman Goulet-Ouellet \and Karel Klouda \and Štěpán Starosta}

\institute{Czech Technical University in Prague, Faculty of Information Technology
	\email{\{herman.goulet.ouellet,karel.klouda,stepan.starosta\}@fit.cvut.cz}}

\maketitle              

\begin{abstract}
	We study circularity in DF0L systems, a generalization of D0L systems. We focus on two different types of circularity, called weak and strong circularity. When the morphism is injective on the language of the system, the two notions are equivalent, but they may differ otherwise. Our main result shows that failure of weak circularity implies unbounded repetitiveness, and that unbounded repetitiveness implies failure of strong circularity. This extends previous work by the second and third authors for injective systems. To help motivate this work, we also give examples of non-injective but strongly circular systems.
	\keywords{D0L systems  \and Circularity.}
\end{abstract}

\section{Introduction}

The notion of circularity in D0L systems appeared in several different forms, starting with the work of Cassaigne on pattern avoidance~\cite{Ca94} and Mignosi and Séébold on repetitiveness~\cite{MiSe}. It is closely related with recognizability, a key notion from symbolic dynamics~\cite{th/Kyriakoglou2019}. The study of circularity in D0L systems is also motivated by its connection with bispecial factors~\cite{Ca,Frid1999,Klouda2019}, in turn linked with factor complexity, return words, Rauzy graphs, etc. One of the main motivations behind this paper is to better understand under which conditions weak and strong circularity can be safely decided, with effective calculation of the circularity thresholds being the ultimate goal.
 
In this paper we work with DF0L systems, a generalization of D0L systems which allows for multiple axioms.
The need for this added generality arises naturally when working within a D0L system, for example when one wants to take powers of the morphism without changing the language (see \Cref{e:power-axioms}).

We moreover study systems where the morphism is not necessarily injective.
This is a departure from many earlier papers, where injectivity is needed for most results.
We focus on two notions of circularity, called weak and strong circularity.
While these are equivalent for injective systems (and more generally \emph{eventually injective} systems, defined in \Cref{s:df0l}), they might differ in general.
 
Our main result, \Cref{t:circ-repet}, extends to the non-injective case a characterization from~\cite{Klouda2019} stating that strong circularity is equivalent to unbounded repetitiveness.
In the non-injective case, the result is no longer a characterization.
A consequence of our main result is that when the system is not unboundedly repetitive (a decidable condition by \cite{KlSt13}), then we can safely calculate the weak circularity threshold.
Since practical calculation of circularity thresholds is also one of our goals, we describe the relation between weak and strong circularity thresholds in \Cref{p:weak-vs-strong}, and provide pseudocode in an appendix.
Finally, we also show that in a strongly circular system, weak circularity is preserved when taking powers of the morphism, something which is not true in general.
The questions of whether strong circularity is decidable or preserved under taking powers of the morphism remain open.

\section{DF0L systems}
\label{s:df0l}

Let $\A^*$ be the set of all finite words over a finite alphabet $\A$. The empty word is denoted by $\emptyw$ and we let $\A^+ = \A^*\setminus\{\emptyw\}$.

\begin{definition}
	A \emph{DF0L system} is a triplet $S = (\A, \varphi, W)$ where $\varphi$ is a morphism $\A^*\to\A^*$ and $W \subseteq\A^+$ is a non-empty finite set of non-empty words called \emph{axioms}.
	If $\varphi$ is non-erasing, i.e.\ $\varphi(a)\neq\emptyw$ for all $a\in A$, then $S$ is called a \emph{PDF0L system}.
\end{definition}

We refer to \cite{RoSa80} for more details on the terminology of L systems.
When $W$ consists of a single element, $S$ is also called a D0L system, and a PD0L system if $\varphi$ is non-erasing.
We note that DF0L systems can arise naturally even when working in the setting of D0L systems, as shown in \Cref{e:power-axioms}.

We define the language of a DF0L system $S = (\A,\varphi,W)$ by
\[
	\lan(S) = \{ u\in A^* \mid \exists k\geq 0, \exists w\in W, \varphi^k(w)\in A^*uA^* \}.
\]
Note that membership in $\lan(S)$ is decidable by \cite[Lemma 3]{Salo2017}.

\begin{example}\label{ex:thue-morse}
	The \emph{Thue--Morse system} is the D0L system $S = (\{a,b\},\varphi,\{a\})$ where $\varphi\from a\mapsto ab, b\mapsto ba$. In this system
	\begin{equation*}
		\lan(S) = \{\emptyw,a, b, aa, ab, ba, bb, aab, aba, abb, baa, bab, bba, aaba, aabb,\dots\}.
	\end{equation*}
\end{example}

Let us say that a DF0L system $(\A,\varphi,W)$ is \emph{injective} if $\varphi$ is injective on $\lan(S)$. Many well-known examples of D0L systems are injective, like the Thue--Morse system above. We next introduce a weaker form of injectivity, under which the equivalence between the two kinds of circularity still holds (\Cref{p:weak-vs-strong}).

\begin{definition}
	A DF0L system $S = (\A,\varphi,W)$ is called \emph{eventually injective} if the following set is finite:
	\begin{equation*}
        \mathrm{\Delta}_S = \left \{ \{u,v\}\subseteq \lan(S) \mid u\neq v, \varphi(u)=\varphi(v) \right \}
	\end{equation*}
\end{definition}

Failure of injectivity of a PDF0L system $S = (\A,\varphi,W)$ can be measured by
\begin{equation*}
	\delta_S =  \max\{|\varphi(u)| : \exists v, \{u,v\}\in \mathrm\Delta_S\},
\end{equation*}
with $\delta_S=0$ when $\mathrm\Delta_S=\emptyset$. Eventual injectivity is equivalent to $\delta_S<\infty$. We do not know whether eventual injectivity, or even injectivity, is decidable. We next give two examples non-injective systems, one of which is eventually injective.

\begin{example}\label{ex:eventually-injective}
	Let $S = (\{a,b,c\}, \varphi, \{a\})$ where
	$ \varphi\from a\mapsto abacc, b\mapsto aba, c\mapsto aba$. This system is eventually injective, with
	\begin{equation*}
		\mathrm\Delta_S = \{ \{b,c\}, \{ba,ca\}, \{ab,ac\}, \{bac,cab\} \},
	\end{equation*}
	and thus $\delta_S = 11$.
\end{example}

\begin{example}\label{ex:non-eventually-injective}
	Let $S = (\{a,b,c\}, \varphi, \{a\})$ where
	$ \varphi\from a\mapsto abaca, b\mapsto aba, c\mapsto aba$.
    Unlike the previous example, this system is not eventually injective. We can obtain an infinite sequence of elements $\{u_n,v_n\}_{n\in\nn}$ in $\mathrm\Delta_S$ as follows: let
	\begin{equation*}
		u_1 = aca,\ u_{n+1} = u_1\varphi(u_n),\quad v_1 = aba,\ v_{n+1} = v_1\varphi(v_n).
	\end{equation*}
    It is clear that $u_n\neq v_n$ and $\varphi(u_n) = \varphi(v_n)$. Observing that $bu_n,av_n\in\lan(S)$, one can show by induction on $n$ that $u_n,v_n\in\lan(S)$ for all $n\in\nn$.
\end{example}

A common technique in the study of D0L systems is to pass to a power of the morphism, for instance to ensure certain growth conditions (as in the proof of \Cref{t:circ-repet}), while preserving the language of the system. Given a system $S = (\A,\varphi,W)$, let us define the $k$-th power of $S$ ($k\geq 1$) as the system
\[
	S^k = (\A,\varphi^k,\{\varphi^i(w) \mid w\in W, 0\leq i<k\}),
\]
which satisfies $\lan(S^k) = \lan(S)$. The next example shows the necessity of using multiple axioms when defining $S^k$.

\begin{example}\label{e:power-axioms}
	Let $S = (\A,\varphi,b)$ where $\varphi\from a\mapsto cb, b\mapsto ad, c\mapsto c, d\mapsto d$.
	Let $S' = (\A,\varphi^2,b)$. The language $\lan(S')$ is contained in $\lan(S)$, but the inclusion is strict, since for instance $ad\notin\lan(S')$.
\end{example}

\section{Circularity}

In this section, we recall two notions of circularity for DF0L systems called \emph{weak} and \emph{strong} circularity. We follow in part the exposition from~\cite{Klouda2019}. We also sketch algorithms for computing the weak and strong circularity thresholds. We start with the definition of interpretations.

\begin{definition}
	Let $S = (\A,\varphi,W)$ be a DF0L system. An \emph{interpretation} of $u$ in $S$ is a triplet $(s,w,t)$ such that $\varphi(w)=sut$ and $w\in\lan(S)$. An interpretation is \emph{minimal} if $|s|<|\varphi(a)|$ and $|t|<|\varphi(b)|$, where $a$ and $b$ are respectively the first and last letters of $w$.
\end{definition}

Given a morphism $\varphi$, we let
\[
	\minlen\varphi = \min(|\varphi(a)| : a\in\A) \quad \text{ and } \quad \maxlen\varphi = \max(|\varphi(a)| : a\in\A).
\]

\begin{lemma}\label{l:inter-length}
	Let $S=(\A,\varphi,W)$ be a PDF0L system and $u\in\lan(S)$. If $(s,w,t)$ is a minimal interpretation of $u$ in $S$, then
	\[
		|u|/\maxlen\varphi\leq |w|\leq 2+(|u|-2)/\minlen\varphi.
	\]
\end{lemma}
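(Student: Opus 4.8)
The plan is to translate everything into a single length identity and then estimate the two ends of the factorization separately. Write $w = a_1 a_2 \cdots a_n$ with $n = |w|$, so that $a = a_1$ and $b = a_n$ are the first and last letters of $w$. Since $(s,w,t)$ is an interpretation we have $\varphi(w) = sut$, and comparing lengths gives the basic identity
\[
    |s| + |u| + |t| = |\varphi(w)| = \sum_{i=1}^{n} |\varphi(a_i)|.
\]
Both inequalities are read off from this identity: the left one needs no minimality, whereas the right one is exactly where the minimality hypothesis does its work.

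For the lower bound I would simply drop the nonnegative terms $|s|$ and $|t|$ and bound each block from above by $\maxlen\varphi$:
\[
    |u| \leq \sum_{i=1}^{n} |\varphi(a_i)| \leq n\,\maxlen\varphi = |w|\,\maxlen\varphi,
\]
which rearranges to $|u|/\maxlen\varphi \leq |w|$.

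For the upper bound, the key step is to use minimality to peel off the two boundary blocks $\varphi(a_1)$ and $\varphi(a_n)$. Minimality gives $|s| \leq |\varphi(a)| - 1$ and $|t| \leq |\varphi(b)| - 1$, so from the identity above,
\[
    |u| \geq \sum_{i=1}^{n} |\varphi(a_i)| - (|\varphi(a_1)| - 1) - (|\varphi(a_n)| - 1) = 2 + \sum_{i=2}^{n-1} |\varphi(a_i)| \geq 2 + (n-2)\,\minlen\varphi.
\]
Solving for $n = |w|$ yields $|w| \leq 2 + (|u|-2)/\minlen\varphi$, as required; note that $\minlen\varphi \geq 1$ since $S$ is a PDF0L system, so the division is legitimate.

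The one point that needs care — and the closest thing to an obstacle in an otherwise routine estimate — is the degenerate case $n = 1$, where the middle sum $\sum_{i=2}^{n-1}$ is empty and $a = b = a_1$, so that the two minimality bounds refer to the same block. Here I would instead verify directly that $|w| = 1 \leq 2 + (|u|-2)/\minlen\varphi$, which holds as soon as $u \neq \emptyw$ (again using $\minlen\varphi \geq 1$). This is also where the tacit assumption $u \neq \emptyw$ enters: for $u = \emptyw$ and $\minlen\varphi = 1$ the claimed upper bound can fail, so I would read the lemma as stated for nonempty $u$, which is the only case relevant to interpretations of factors.
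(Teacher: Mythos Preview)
Your proof is correct and follows essentially the same approach as the paper: both derive the lower bound directly from $|u|\leq|\varphi(w)|\leq\maxlen\varphi\cdot|w|$, and for the upper bound both use minimality to strip the two boundary blocks and bound the remaining $|w|-2$ middle blocks below by $\minlen\varphi$ each. The paper phrases this via the factorization $w=aw'b$ and $u=s'\varphi(w')t'$, whereas you write it out as a sum over the letters of $w$; your explicit treatment of the degenerate case $|w|=1$ (and the tacit assumption $u\neq\emptyw$) is actually a bit more careful than the paper, which simply declares the small cases trivial.
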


\begin{proof}
	The leftmost inequality follows from $|u|\leq|\varphi(w)|\leq \maxlen\varphi\cdot|w|$. For the rightmost inequality, first observe that if $|w|=1$ or $|w|=0$, then the statement becomes trivial. Thus we assume that $|w|\geq 2$. Let $a$ and $b$ be respectively the first and last letters of $w$, and let $w=aw'b$. Since the interpretation is minimal, there exist $s',t'\in \A^+$ and $s'',t''\in \A^*$ such that $\varphi(a) = s''s'$, $\varphi(b)=t't''$ and $u=s'\varphi(w')t'$. It follows that
	\[
		|u| = |s'|+|t'|+|\varphi(w')|
		\geq 2 + |w'|\minlen\varphi
		= 2+(|w|-2)\minlen\varphi.
	\]
	Since $\varphi$ is non-erasing we have $\minlen\varphi > 0$, and we obtain the desired inequality by solving for $|w|$.
	\qed
\end{proof}

Thus in a PDF0L system, the set of minimal interpretations of a given word is finite and computable. We may write a simple algorithm which computes all minimal interpretations of the words of length $n$ by looking at the images of all words of length $\lfloor 2+(n-2)/\minlen\varphi\rfloor$. A pseudocode implementation is provided in \cref{s:pseudocode}, Algorithm \ref{lst:inter}.

\begin{example}\label{ex:thue-morse-inter}
	In the Thue--Morse system (\cref{ex:thue-morse}), the word $aba$ admits exactly two minimal interpretations, namely $(b,bb,\emptyw)$ and $(\emptyw,aa,b)$. On the other hand the word $aa$ admits only one minimal interpretation, namely $(b,ba,a)$.
\end{example}

Next we turn to a finer analysis of how interpretations behave in a DF0L system $S = (\A,\varphi,W)$. A pair $(u',u'')$ is said to be \emph{compatible} with an interpretation $(s,w,t)$ of $u'u''$ in $S$ if there is a pair $(w',w'')$ such that $w'w''=w$, $\varphi(w')=su'$, $\varphi(w'')=u''t$. A pair $(u',u'')$ is called \emph{admissible} if it is compatible with at least one interpretation of $u'u''$.

\begin{definition}
	Let $S = (\A,\varphi,W)$ be a DF0L system.
	\begin{enumerate}
		\item A pair $(u',u'')$ is  called \emph{weakly synchronizing} if it is compatible with all interpretations of $u'u''$. We say that the word $u=u'u''$ is \emph{weakly synchronized}.
		\item A pair $(u',u'')$ where $u'\neq\emptyw$ is called \emph{strongly synchronizing} if there is $a\in\A$ such that for all interpretations $(s,w,t)$ of $u'u''$, there is a pair $(w',w'')$ such that $w=w'w''$, $\varphi(w')=su'$, $\varphi(w'')=u''t$, where $w'$ ends with $a$.
	\end{enumerate}
\end{definition}

\begin{remark}\label{r:sync-factor}
	To test whether a pair $(u',u'')$ is weakly or strongly synchronizing, it suffices look at minimal interpretations of $u'u''$.
	Thus those properties are decidable since there is a finite number of minimal interpretation by \Cref{l:inter-length}.
\end{remark}

\begin{lemma}\label{l:extend-sync}
	Let $S$ be a DF0L system and let $(u_1,u_2)$ be such that $u_1u_2\in\lan(S)$. If there is a pair $(v_1,v_2)$ such that $u_1\in \A^*v_1$, $u_2\in v_2\A^*$, and $(v_1,v_2)$ is weakly (strongly) synchronizing, then $(u_1,u_2)$ is also weakly (strongly) synchronizing.
\end{lemma}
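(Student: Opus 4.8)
The plan is to reduce any interpretation of $u_1u_2$ to an interpretation of $v_1v_2$, apply the synchronization hypothesis on $(v_1,v_2)$, and transfer the resulting factorization back. First I would write $u_1 = xv_1$ and $u_2 = v_2y$, which is possible by the hypotheses $u_1\in\A^*v_1$ and $u_2\in v_2\A^*$. The guiding observation is that the cut between $u_1$ and $u_2$ in $u_1u_2 = xv_1v_2y$ sits at exactly the same position as the cut between $v_1$ and $v_2$, so any interpretation of $u_1u_2$ should already ``see'' the synchronization of $(v_1,v_2)$.

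Concretely, let $(s,w,t)$ be an arbitrary interpretation of $u_1u_2$, so that $\varphi(w) = s\,u_1u_2\,t = (sx)\,v_1v_2\,(yt)$ with $w\in\lan(S)$. The key step is to observe that this exhibits $(sx,w,yt)$ as an interpretation of $v_1v_2$. Since $(v_1,v_2)$ is weakly synchronizing, it is compatible with this induced interpretation, giving a factorization $w = w'w''$ with $\varphi(w') = (sx)v_1$ and $\varphi(w'') = v_2(yt)$. I would then check, by reassociating the concatenations, that this same factorization witnesses compatibility of $(u_1,u_2)$ with $(s,w,t)$: indeed $\varphi(w') = (sx)v_1 = s(xv_1) = su_1$ and $\varphi(w'') = v_2(yt) = (v_2y)t = u_2t$. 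As $(s,w,t)$ was arbitrary, $(u_1,u_2)$ is compatible with all interpretations of $u_1u_2$, hence weakly synchronizing.

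For the strong case the argument is identical, with one addition: the hypothesis that $(v_1,v_2)$ is strongly synchronizing requires $v_1\neq\emptyw$, whence $u_1 = xv_1\neq\emptyw$ as needed, and it supplies a letter $a\in\A$ such that the factorization above can be chosen with $w'$ ending in $a$. Taking the same $a$ for $(u_1,u_2)$ and running the reduction for every interpretation $(s,w,t)$ of $u_1u_2$ yields strong synchronization. I do not expect a genuine obstacle here; the only point demanding care is the bookkeeping of the prefix/suffix reassociations $(sx)v_1 = su_1$ and $v_2(yt) = u_2t$, together with confirming that passing from $(s,w,t)$ to $(sx,w,yt)$ genuinely produces an interpretation of $v_1v_2$ (which needs only $\varphi(w)=(sx)(v_1v_2)(yt)$ and $w\in\lan(S)$, both immediate).
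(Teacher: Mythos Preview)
Your proposal is correct and follows essentially the same argument as the paper's proof: both reduce an interpretation $(s,w,t)$ of $u_1u_2$ to the interpretation $(sx,w,yt)$ of $v_1v_2$, apply the synchronization hypothesis, and reassociate to recover compatibility with $(u_1,u_2)$. Your version is slightly more explicit in checking $u_1\neq\emptyw$ in the strong case, but otherwise the proofs are the same up to notation.
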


\begin{proof}
	Let $S=(\A,\varphi,W)$. Write $u_1=pv_1,u_2=v_2q$ and let $(s,w,t)$ be an interpretation of $u_1u_2$. As $(sp,w,qt)$ is an interpretation of $v_1v_2$ and $(v_1,v_2)$ is synchronizing, there is a pair $(w_1,w_2)$ such that $w_1w_2=w$, $\varphi(w_1) =spv_1=su_1$, $\varphi(w_2) = v_2qt = u_2t$. Thus $(s,w,t)$ is compatible with $(u_1,u_2)$, hence $(u_1,u_2)$ is weakly synchronizing. If  $(v_1,v_2)$ is strongly synchronizing, the last letter of $w_1$ does not depend on $w$, and so $(u_1,u_2)$ is strongly synchronizing as well.
	\qed
\end{proof}

In particular a word $u\in\lan(S)$ which has a weakly synchronized factor is also weakly synchronized. For example, in the Thue--Morse system, any word of $\lan(S)$ which admits $aa$ as a factor is weakly synchronized (see \cref{ex:thue-morse-inter}).

Next we define two notions of circularity. The first one is due to Cassaigne~\cite{Ca94}, while the second one is a variation found in \cite{Klouda2019} of a notion introduced by Mignosi and Séébold~\cite{MiSe}. 
\begin{definition}\label{d:circularity}
	Let $S = (\A,\varphi,W)$ be a DF0L system.
	\begin{enumerate}
		\item $S$ is \emph{weakly circular} if there exists an integer $D\geq 0$ such that all words $u\in\lan(S)$ where $|u|>D$ are weakly synchronized.
		\item $S$ is \emph{strongly circular} if there exists an integer $D'\geq 0$ such that all admissible pairs $(u',u'')$ where $|u'|> D'$ and $|u''|>D'$ are strongly synchronizing.
	\end{enumerate}
\end{definition}

The smallest value of $D$ and $D'$ (if they exist) are called the \emph{thresholds} for weak and strong circularity.
When $S$ is clear from context, they are denoted by $D_\weak$ and $D_\strong$.
By \cref{l:extend-sync}, we can check whether a given value $D$ exceeds $D_\weak$ or $D_\strong$ by checking respectively all words of length $D+1$ and all pairs $(u',u'')$ where $|u'|,|u''|>D+1$. 
Since the property of being weakly or strongly synchronizing is decidable (\cref{r:sync-factor}), this can be turned into algorithms for computing $D_\weak$ and $D_\strong$, if they exist. 
For instance: starting from $D=\maxlen\varphi-2$ (an obvious lower bound for $D_\weak$), one can test whether all words of length $D$ are weakly synchronized, and else increment the value of $D$ and start over. 
We can improve on the brute force approach by reusing calculations from earlier steps, thanks to \Cref{l:extend-sync}. 
Pseudocode can be found in \cref{s:pseudocode}, Algorithms~\ref{lst:weak-sync} and \ref{lst:strong-sync}. 

\begin{example}
	In the Thue--Morse system (\cref{ex:thue-morse}), the word $aba$ is not weakly synchronized (\cref{ex:thue-morse-inter}). 
    But all words of length 4 in $\lan(S)$ admit only one minimal interpretation, thus $D_\weak=3$.
	Similarly, $D_\strong=1$.
\end{example}

\begin{example}\label{ex:eventually-injective-circ}
	For the system of \Cref{ex:eventually-injective}, one can check that $D_\weak=D_\strong=3$. We used this fact to calculate the set $\mathrm\Delta_S$ for this system.
\end{example}

\begin{example}\label{ex:non-eventually-injective-circ}
    The system from \Cref{ex:non-eventually-injective} has $D_\weak=3$ and $D_\strong=9$. 
\end{example}

Thus the previous example is a strongly circular system which is not eventually injective. An example which is weakly but not strongly circular is given in~\cite[Example~5]{Klouda2019}. The next proposition extends an observation from \cite{Klouda2019} which can also be found in \cite[Observation 3.6.14, Proposition 3.6.17]{th/Kyriakoglou2019}.
\begin{proposition}\label{p:weak-vs-strong}
	Let $S=(\A,\varphi,W)$ be a DF0L system.
	\begin{enumerate}
		\item If $S$ is strongly circular, then it is weakly circular with $D_\weak\leq 2D_\strong+\maxlen\varphi$.
		\item If $S$ is eventually injective and weakly circular, then it is strongly circular with $D_\strong\leq D_\weak+\delta_S+1$.
	\end{enumerate}
\end{proposition}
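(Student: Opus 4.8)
The plan is to treat the two statements separately, in both cases working with the tilings of $\varphi(w)$ induced by an interpretation $(s,w,t)$ and tracking the \emph{block boundaries}, i.e.\ the positions where the images $\varphi(\text{letters of }w)$ meet.

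For (1), I would first record the immediate fact that every strongly synchronizing pair is weakly synchronizing, obtained by simply dropping the condition on the last letter of $w'$ in \Cref{d:circularity}. Hence, given $u\in\lan(S)$ with $|u|>2D_\strong+\maxlen\varphi$, it suffices to exhibit a factorization $u=u'u''$ with $(u',u'')$ admissible and $|u'|,|u''|>D_\strong$: strong circularity then makes it strongly, hence weakly, synchronizing, so that $u$ is weakly synchronized. If $u$ has no interpretation, it is weakly synchronized vacuously. Otherwise I fix an interpretation $(s,w,t)$; its block boundaries cut $u$ into pieces of length at most $\maxlen\varphi$, so the open interval $(D_\strong,\,|u|-D_\strong)$, whose length $|u|-2D_\strong$ exceeds $\maxlen\varphi$, must contain such a boundary $p$. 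Splitting $u$ at $p$ produces the desired admissible pair, giving $D_\weak\le 2D_\strong+\maxlen\varphi$.

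For (2), let $(u',u'')$ be admissible with $|u'|,|u''|>D_\weak+\delta_S+1$, and write $m=|u'|$ for the cut position in $u=u'u''$. The heart of the argument is to produce two \emph{global} synchronizing positions $m'<m<m^+$ that are block boundaries in \emph{every} interpretation of $u$, with $m-m'>\delta_S$. I would obtain $m'$ by applying weak circularity to the prefix of $u$ of length $m-\delta_S-1$, which exceeds $D_\weak$ by hypothesis: this prefix is weakly synchronized, its synchronizing cut restricts to a boundary common to all interpretations of $u$, and it sits at distance more than $\delta_S$ to the left of $m$. A symmetric application to the suffix of $u$ of length $|u''|-\delta_S-1$ yields $m^+$ to the right of $m$. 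Now, in every interpretation the factor $u[m':m^+]$ is the image under $\varphi$ of the sub-preimage lying between these two global boundaries, and that sub-preimage is a factor of some $w\in\lan(S)$, hence itself lies in $\lan(S)$. Since $|u[m':m^+]|>\delta_S$, the definition of $\delta_S$ (finite by eventual injectivity) forces all these sub-preimages to be one and the same word $\beta$. Thus every interpretation tiles $u[m':m^+]$ identically, with the same internal boundaries and the same letters. Admissibility supplies one interpretation with a boundary at $m$; as the tiling is the same for all of them, every interpretation has a boundary at $m$ preceded by the same letter $a$ of $\beta$. This is exactly strong synchronization of $(u',u'')$ with witness $a$, yielding $D_\strong\le D_\weak+\delta_S+1$.

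The main obstacle is this key step. A single global boundary on the left of $m$ would only show that the image of one sub-preimage is a \emph{prefix} of the image of another, whereas $\delta_S$ and eventual injectivity speak about \emph{equal} images, not prefix-comparable ones; bounding things via $\delta_S$ in the prefix-comparable situation is not available from the stated definitions. Pinning a boundary on \emph{both} sides of $m$ is precisely what turns the comparison into an equality $\varphi(\beta)=u[m':m^+]$ and lets me invoke $\delta_S$, and it is also what forces the two-sided length hypothesis $|u'|,|u''|>D_\weak+\delta_S+1$. A secondary point to verify carefully is that a synchronizing position for a prefix (resp.\ suffix) of $u$ really does restrict to a boundary shared by all interpretations of $u$; this holds because each interpretation of $u$ induces an interpretation of the prefix (resp.\ suffix) with the same underlying word $w$, to which \Cref{l:extend-sync} and the definition of weak synchronization then apply.
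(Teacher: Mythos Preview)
Your argument is correct and matches the paper's proof in all essentials: part~(1) is identical (locate a block boundary in the interpretation at distance $>D_\strong$ from both ends, then invoke strong circularity), and part~(2) follows the same scheme of pinning global synchronization points on either side of the cut so that the middle segment has $\varphi$-image longer than $\delta_S$, forcing the corresponding preimages in $\lan(S)$ to coincide and thereby fixing the letter before the cut. The only cosmetic difference is that the paper places its right-hand boundary via weak synchronization of $u''$ itself rather than of a proper suffix of $u''$ (the asymmetric choice already works because the left gap $|r|=\delta_S+1$ alone makes the middle exceed $\delta_S$), but this yields the same bound $D_\strong\le D_\weak+\delta_S+1$.
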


\begin{proof}
	Let $u$ be a word such that $|u|>2D_\strong+\maxlen\varphi$. Let $(s,w,t)$ be an interpretation of $u$ and $w=w_1w_2$ where $w_2$ is the shortest suffix of $w$ such that $|\varphi(w_2)|-|t| > D_\strong$. Write $\varphi(w_1)=su_1$, $\varphi(w_2) = u_2t$, and $w_2 = aw_3$ where $u_1,u_2\in\A^*$ and $a\in\A$. By minimality of $w_2$, we have $|\varphi(w_3)|-|t|\leq D_\strong$ and 
	\begin{equation*}
		|u_1|
		= |u| - |\varphi(a)| - |\varphi(w_3)| - |t|
		\geq |u| - \maxlen\varphi - D_\strong
		> D_\strong.
	\end{equation*}
	Thus $(u_1,u_2)$ is strongly synchronizing, and in particular weakly synchronizing.

	Let $(u_1,u_2)$ be an admissible pair such that $|u_1|=|u_2|=D_\weak+\delta_S+2$. Let $(s,w,t)$ be an interpretation of $u_1u_2$ compatible with $(u_1,u_2)$ and $(\bar s,\bar w,\bar t)$ be any other interpretation of $u_1u_2$. Write $u_1 = u_3r$ where $u_3,r\in\A^*$ and $|r|=\delta_S+1$. By assumption, $u_3$ and $u_2$ are weakly synchronized, so we may find factorizations $u_1 = p_1q_1r$, $u_2=p_2q_2$, $w = w_1w_3w_2$, and $\bar w = \bar w_1\bar w_3\bar w_2$ such that
	\begin{gather*}
		sp_1 = \varphi(w_1),\ \bar sp_1 = \varphi(\bar w_1),\quad
		q_2t = \varphi(w_2),\ q_2\bar t = \varphi(\bar w_2),\\
		q_1rp_2 = \varphi(w_3) = \varphi(\bar w_3).
	\end{gather*}
	Since $|q_1rp_2|>\delta_S$, it follows that that $w_3 = \bar w_3$. Moreover since $(s,w,t)$ is compatible with $(u_1,u_2)$, there must be a factorization $w_3 = xy$ with $x\neq \emptyw$ such that $\varphi(w_1x) = su_1$. It follows that $\bar w = \bar w_1 xy\bar w_2$ with $\varphi(\bar w_1x) = \bar s u_1$. This shows that $(u_1,u_2)$ is strongly synchronizing.
	\qed
\end{proof}

\begin{proposition}\label{p:weak-powers}
	Let $S=(\A,\varphi,W)$ be a PDF0L system and $u\in\lan(S)$. 
	If $u$ is weakly circular in $S^k$ where 
    $|u| > \maxlen\varphi\cdot\max\{ |\varphi^{k-2}(x)| : x\in W\}$,
    then $u$ is also weakly circular in $S$.
\end{proposition}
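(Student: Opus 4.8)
The plan is to prove that a single factorization $u=u'u''$ witnessing weak synchronization of $u$ in $S^k$ (that is, compatible with every interpretation of $u$ in $S^k$) also witnesses weak synchronization of $u$ in $S$; here I read ``$u$ is weakly circular in $S^k$'' as ``$u$ is weakly synchronized in $S^k$'', the only word-level notion at hand. So I would fix such a factorization and take an arbitrary interpretation $(s,w,t)$ of $u$ in $S$, with the goal of showing that $(u',u'')$ is compatible with it.

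The first step is to use the length hypothesis to control $w$. Since $\varphi$ is non-erasing, $|u|\le|\varphi(w)|\le\maxlen\varphi\cdot|w|$, so the assumption $|u|>\maxlen\varphi\cdot\max\{|\varphi^{k-2}(x)|:x\in W\}$ forces $|w|>\max\{|\varphi^{k-2}(x)|:x\in W\}$. As $w\in\lan(S)$, it is a factor of some $\varphi^m(x)$ with $x\in W$; since lengths are non-decreasing under a non-erasing morphism, this bound rules out $m\le k-2$, so $m\ge k-1$. Setting $y=\varphi^{m-k+1}(x)\in\lan(S)$, I obtain that $w$ is a factor of $\varphi^{k-1}(y)$, say $\varphi^{k-1}(y)=\alpha w\beta$.

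The second step lifts $(s,w,t)$ to $S^k$. Applying $\varphi$ to $\varphi^{k-1}(y)=\alpha w\beta$ and using $\varphi(w)=sut$ gives $\varphi^k(y)=(\varphi(\alpha)s)\,u\,(t\varphi(\beta))$, so $(\varphi(\alpha)s,\,y,\,t\varphi(\beta))$ is an interpretation of $u$ in $S^k$ (recall $\lan(S^k)=\lan(S)$). Weak synchronization of $u$ in $S^k$ then yields a factorization $y=y'y''$ with $\varphi^k(y')=\varphi(\alpha)su'$ and $\varphi^k(y'')=u''t\varphi(\beta)$.

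The final step descends this cut to $S$, and is where I expect the main obstacle to lie. Writing $P=\varphi^{k-1}(y')$ and $Q=\varphi^{k-1}(y'')$, I have $PQ=\alpha w\beta$, $\varphi(P)=\varphi(\alpha)su'$, and $\varphi(Q)=u''t\varphi(\beta)$. The crux is to show that the cut between $P$ and $Q$ falls inside the $w$-block, i.e.\ $|\alpha|\le|P|\le|\alpha|+|w|$. Since $\alpha$, $P$, and $\alpha w$ are all prefixes of $\varphi^{k-1}(y)$ they are pairwise comparable; comparing the lengths of their $\varphi$-images (via $\varphi(P)=\varphi(\alpha)su'$ and $\varphi(\alpha w)=\varphi(\alpha)sut$) and invoking that a strictly longer prefix has a strictly longer image under a non-erasing morphism rules out both $P<\alpha$ and $\alpha w<P$. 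Hence $P=\alpha w'$ for some prefix $w'$ of $w$; writing $w=w'w''$ gives $Q=w''\beta$, and cancelling $\varphi(\alpha)$ and $\varphi(\beta)$ in the two image identities yields $\varphi(w')=su'$ and $\varphi(w'')=u''t$. This is exactly compatibility of $(u',u'')$ with $(s,w,t)$; as the interpretation was arbitrary, $u$ is weakly synchronized in $S$. Everything outside this prefix-comparison is routine bookkeeping with the length bound and the definition of $S^k$.
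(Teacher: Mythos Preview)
Your proof is correct and follows essentially the same approach as the paper's: fix a weakly synchronizing pair in $S^k$, lift an arbitrary $S$-interpretation $(s,w,t)$ to an $S^k$-interpretation via some $y\in\lan(S)$ with $\varphi^{k-1}(y)=\alpha w\beta$, use the synchronizing pair to split $y$, and then argue by prefix comparability and the non-erasing property that the induced cut in $\varphi^{k-1}(y)$ lands inside $w$. Your derivation of the bound $|w|>\max\{|\varphi^{k-2}(x)|:x\in W\}$ directly from $|u|\le\maxlen\varphi\cdot|w|$ is exactly the leftmost inequality of \Cref{l:inter-length}, and your explicit construction of $y=\varphi^{m-k+1}(x)$ spells out a step the paper leaves implicit.
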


\begin{proof}
	Let $C= \maxlen\varphi\cdot\max\{ |\varphi^{k-2}(x)| : x\in W\}$. Take a word $u\in\lan(S)$ such that $|u|>C$ which is weakly synchronized in $S^k$.
	Thus there is a pair $(u_1,u_2)$ such that $u_1u_2 = u$ and which is weakly synchronizing in $S^k$.
	Let us show that $(u_1,u_2)$ is also weakly synchronizing in $S$.
	For this, fix an interpretation $(s,w,t)$ of $u$.
	By \Cref{l:inter-length} we have $|w|>|\varphi^{k-2}(x)|$ for every $x\in W$, so $w$ is a factor of $\varphi^{k-1}(z)$ for some $z\in\lan(S)$.
	Let $\varphi^{k-1}(z) = pwq$. Note that $\varphi^k(z) = \varphi(p)sut\varphi(q)$, so $(\varphi(p)s,z,t\varphi(q))$ is an interpretation of $u$ in $S^k$.
	Since $(u_1,u_2)$ is synchronizing in $S^k$, there is a pair $(z_1,z_2)$ such that
	\begin{equation*}
		z = z_1z_2,\quad \varphi^k(z_1) = \varphi(p)su_1,\quad \varphi^k(z_2) = u_2t\varphi(q).
	\end{equation*}
	Since $\varphi^{k-1}(z_1)$ is a prefix of $\varphi^{k-1}(z) = pwq$, it follows that $\varphi^{k-1}(z_1)$ and $pw$ are prefix comparable. If $pw$ would be a proper prefix of $\varphi^{k-1}(z_1)$, then $\varphi(pw) = \varphi(p)sut$ would be a proper prefix of $\varphi^k(z_1) = \varphi(p)su_1$, which is a contradiction (where we used the fact that $\varphi$ is non-erasing). Thus we conclude that $\varphi^{k-1}(z_1)$ is a prefix of $pw$. Likewise it is clear that $p$ must be a prefix of $\varphi^{k-1}(z_1)$. Therefore there is a pair $(w_1,w_2)$ such that $w_1w_2=w$ and $\varphi^{k-1}(z_1) = pw_1$, and as a result $\varphi^{k-1}(z_2) = w_2q$. Finally we have
	\begin{gather*}
		\varphi(p)\varphi(w_1) = \varphi(pw_1) = \varphi^k(z_1) = \varphi(p)su_1,
	\end{gather*}
	and therefore $\varphi(w_1) =su_1$. Likewise we conclude that $\varphi(w_2)=u_2t$, which shows that $(u_1,u_2)$ is compatible with $(s,w,t)$.
	\qed
\end{proof}

In particular, if $S^k$ is weakly circular for some $k\geq 1$, then so is $S$.
The next example shows that the converse is false.

\begin{example}\label{e:non-circ-power}
	Let $S = (\{a,b,c\},\varphi,\{a\})$ where $\varphi\from a\mapsto aac,b\mapsto bc, c\mapsto bc$. The system $S$ is weakly circular with weak threshold 1, but the power $S^2 = (\{a,b,c\},\varphi^2,\{a,aac\})$ is not. For instance, the words $(bc)^{4k}$ have interpretations which never synchronize, such as $(\emptyw,(bc)^k,\emptyw)$ and $(bc,(bc)^kb,bc)$.
\end{example}

\section{Repetitiveness}

In this section, we relate strong circularity with repetitiveness of the system.
Let $S=(\A,\varphi,W)$ be a DF0L system.
We say that a word $w\in\A^*$ is \emph{bounded} if $\lim_{k\to\infty}|\varphi^k(w)|<\infty$, and \emph{unbounded} otherwise.

\begin{definition}
	A DF0L system is called \emph{unboundedly repetitive} if there exists an unbounded word $v \in \lan(S)$ such that $v^k \in \lan(S)$ for all $k\in\nn$.
\end{definition}

Recall that a word $w$ is \emph{primitive} if $u^n=w\implies n=1$. For every word $u\neq\emptyw$, there exists a unique primitive word $\rho(u)$, called the \emph{primitive root} of $u$, such that $u=\rho(u)^n$. It is well known that $\rho(x) = \rho(y)$ if and only if $xy=yx$. When studying repetitiveness of DF0L systems, it is natural to consider the set
\[
	\repets(S) = \left\{ v \in \lan(S) \mid v \text{ is primitive and } \forall k\in\nn, v^k \in \lan(S)  \right\}.
\]

The second and third author proved in \cite{Klouda2019} that in an injective D0L system, strong circularity is equivalent to unbounded repetitiveness. In this section, we will prove the following generalization of this result.

\begin{theorem}\label{t:circ-repet}
	Let $S = (\A, \varphi, W)$ be a PDF0L system.
	\begin{enumerate}
		\item If $S$ is not weakly circular, then it is unboundedly repetitive.
		      \label{i:weakly}
		\item If $S$ is unboundedly repetitive, then it is not strongly circular.
		      \label{i:strongly}
	\end{enumerate}
\end{theorem}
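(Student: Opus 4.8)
The plan is to treat the two implications by separate methods. The second one (unbounded repetitiveness forbids strong circularity) I would prove by explicitly manufacturing bad admissible pairs out of the periodic structure, while the first one (failure of weak circularity forces an unbounded repetition) is a compactness-plus-desubstitution argument, and is where I expect the real difficulty to lie.

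For the second statement I would first reduce to a convenient witness. If $v$ is an unbounded word with $v^k\in\lan(S)$ for all $k$, then, since $\lan(S)$ is closed under factors, every power of the primitive root $\rho(v)$ also lies in $\lan(S)$, and $\rho(v)$ is again unbounded; so I may assume the witness $v$ is primitive, of length $m$. For $m\geq 2$ the key object is a long power $U=\varphi(v^N)=\varphi(v)^N$, which lies in $\lan(S)$ and carries the canonical interpretation $(\emptyw,v^N,\emptyw)$. I would compare this with a second interpretation obtained from a different \emph{phase} of the periodic word $v^\omega$: since all powers of all rotations of $v$ are factors of large $v^{N'}$, a sufficiently long window of $v^\omega$ starting at phase $i$ lies in $\lan(S)$ and, after applying $\varphi$, yields an interpretation of $U$. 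To each interpretation I attach its \emph{coloured breakpoint pattern} — the positions in $U$ that are images of letter boundaries of the preimage, each labelled by the preimage letter consumed there. Primitivity of $v$ guarantees that the phase-$0$ and phase-$i$ patterns (for a suitable $0<i<m$) are not globally equal: either they occupy different positions, so some cut is a breakpoint of one interpretation but not the other (that interpretation then admits no compatible split, and strong synchronization fails), or they share positions but assign different labels there (the two interpretations then force different last letters of $w'$, again breaking strong synchronization). Taking $N$ large places such a cut far from both ends, producing an admissible pair with both parts arbitrarily long that is not strongly synchronizing; hence $S$ is not strongly circular. The degenerate case $m=1$ is handled by the same dichotomy applied to the internal structure of $\varphi(v)$.

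For the first statement I would argue constructively. Using \Cref{p:weak-powers} (failure of weak circularity is inherited on passing to a power of $\varphi$) together with the fact that unbounded repetitiveness is insensitive to powers, since $\lan(S^k)=\lan(S)$ and boundedness is unchanged, I may first replace $\varphi$ by a suitable power so that every unbounded letter is genuinely expanding. Failure of weak circularity then supplies words of unbounded length that are not weakly synchronized; by \Cref{l:extend-sync} this property is inherited by factors, so the non-weakly-synchronized words form a factor-closed set, and K\"onig's lemma produces an infinite word $\omega$, all of whose factors lie in $\lan(S)$ and are non-weakly-synchronized. Such an $\omega$ carries two interpretations that never share a common cut. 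I would then desubstitute: a non-synchronizing interpretation of $\omega$ yields a preimage infinite word that is again non-weakly-synchronized, and iterating produces a sequence of desubstitutions. Finiteness of the alphabet forces this sequence to cycle, and the expansiveness arranged above forces the resulting periodic structure to have an unbounded period $v$; unwinding the cycle shows that $\varphi$ maps $v^\omega$ to a shift of itself, which gives $v^k\in\lan(S)$ for all $k$ and exhibits $S$ as unboundedly repetitive.

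The main obstacle is the first implication, and within it the extraction of an \emph{unbounded} periodicity together with the guarantee that \emph{all} of its powers lie in $\lan(S)$. Controlling the iterated desubstitution — showing that the length of the growing part strictly decreases until the orbit must become periodic, and that the recovered period genuinely contains an expanding letter rather than collapsing to a bounded word — is exactly where the growth normalisation and a careful pigeonhole are indispensable. For the second implication the only delicate point is ensuring that the two phase interpretations really differ; this is precisely what primitivity of $v$ buys, and I would isolate it as a short lemma about coloured breakpoint patterns of purely periodic words.
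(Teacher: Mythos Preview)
Your approach to Part~\ref{i:strongly} has a concrete gap. Comparing the interpretation $(\emptyw, v^N, \emptyw)$ of $U=\varphi(v)^N$ with a ``phase-$i$'' interpretation whose preimage is a window of $v^\omega$ starting at position $i$ does \emph{not} produce distinct coloured breakpoint patterns. If $v=v_1v_2$ with $|v_1|=i$, the natural phase-$i$ interpretation is $(\varphi(v_2),\,v_2v^{N'}v_1',\,t)$ for suitable $N',v_1',t$; but inside $U$ its letter boundaries occur at exactly the positions $|\varphi(c_1)|,|\varphi(c_1c_2)|,\ldots$ (where $v=c_1\cdots c_m$), with exactly the same labels, as for the canonical interpretation --- both preimages are factors of the same $v^\omega$. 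Primitivity of $v$ constrains $v$, not $\varphi(v)$, and gives you nothing here. What is actually needed is a primitive $z\in\repets(S)$ with $\varphi(z)$ \emph{non}-primitive, say $\varphi(z)=x^m$ with $m>1$: only then can one shift by $|x|$ in the \emph{image} and compare $(\emptyw,z^i,\emptyw)$ with $(x,z^{i+1},x^{m-1})$ to get genuinely different cut structures. The existence of such a $z$ is far from automatic; the paper obtains it through \Cref{l:lift-repet}, \Cref{p:unbounded_letter_implies_fp} and \Cref{le:fp_w_to_w} (the last two showing that a conjugate $u$ of $v$ satisfies $\varphi^\ell(u)=u^n$ for some $n>1$), and this step is what your argument is missing.

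For Part~\ref{i:weakly} your outline is in the right spirit but remains a heuristic. The K\"onig step is fine, but the claims that the limit word carries two interpretations with no common cut, that desubstitution preserves non-synchronization, and especially that ``finiteness of the alphabet forces the sequence to cycle'' all need real arguments; the last one in particular requires a bounded discrete invariant to pigeonhole on, which you never identify. The paper proceeds differently and more concretely: it isolates a minimal $p$-invariant subalphabet and a letter $g$ such that every $\varphi^{p\ell}(g)$ is non-synchronized (\Cref{l:integer-p}, \Cref{l:non-pushy-sequence}), then shows that the preimages $v_\ell$ appearing in the incompatible interpretations of $\varphi^{p\ell}(g)$ have \emph{bounded length} (via a growth comparison available only inside a minimal invariant subalphabet), and pigeonholes on the finitely many possible $v_\ell$ to extract the unbounded repetition. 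That bounded-length step is exactly the ``careful pigeonhole'' you anticipate, but your desubstitution framework does not set it up.
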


Thanks to \Cref{p:weak-vs-strong} it follows that, when the system is eventually injective, both forms of circularity are equivalent to the absence of unbounded elements in $\repets(S)$, which is a decidable property by~\cite{KlSt13}. 

Observe that unbounded repetitiveness is well-behaved with respect to taking powers, since this does not change the language or the set of unbounded words. Thus we obtain the following corollary of \Cref{t:circ-repet}.
\begin{corollary}
	If a PDF0L system $S$ is strongly circular, then all of the powers $S^n$ for $n\geq 1$ are weakly circular.
\end{corollary}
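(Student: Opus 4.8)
The plan is to route everything through unbounded repetitiveness, using the two parts of \Cref{t:circ-repet} as implications read in their contrapositive forms, and exploiting that unbounded repetitiveness is invariant under taking powers.

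First I would note that since $S$ is strongly circular, the contrapositive of the second part of \Cref{t:circ-repet} shows that $S$ is \emph{not} unboundedly repetitive.

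Next I would argue that none of the powers $S^n$ is unboundedly repetitive either. This is where the remark preceding the statement does the work: passing from $\varphi$ to $\varphi^n$ leaves the language unchanged, since $\lan(S^n)=\lan(S)$, and it also leaves the set of unbounded words unchanged. The latter holds because $\varphi$ is non-erasing, so $k\mapsto|\varphi^k(w)|$ is non-decreasing; hence $|\varphi^k(w)|\to\infty$ if and only if the subsequence $|\varphi^{nk}(w)|\to\infty$, which says exactly that $w$ is unbounded under $\varphi$ precisely when it is unbounded under $\varphi^n$. Consequently an unbounded word $v$ with $v^k\in\lan(S)$ for all $k$ would witness unbounded repetitiveness of $S^n$ exactly when it witnesses that of $S$; since $S$ admits no such witness, neither does $S^n$.

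Finally, each $S^n$ is again a PDF0L system: $\varphi^n$ is non-erasing and each axiom $\varphi^i(w)$ is a non-empty word. Hence the first part of \Cref{t:circ-repet} applies to $S^n$, and its contrapositive --- a PDF0L system that is not unboundedly repetitive is weakly circular --- yields that every $S^n$ is weakly circular. I do not anticipate a serious obstacle: the only step needing a little care is the claim that $\varphi$ and $\varphi^n$ have the same unbounded words, which rests on the monotonicity of $k\mapsto|\varphi^k(w)|$ for non-erasing $\varphi$, while the rest is a direct reading of the two parts of \Cref{t:circ-repet} through their contrapositives.
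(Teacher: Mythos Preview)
Your proposal is correct and follows essentially the same route as the paper: use the contrapositive of part~\ref{i:strongly} of \Cref{t:circ-repet} to conclude that $S$ is not unboundedly repetitive, observe that unbounded repetitiveness is invariant under taking powers (since $\lan(S^n)=\lan(S)$ and the unbounded words coincide), and then apply the contrapositive of part~\ref{i:weakly} to each $S^n$. The paper simply states this reasoning in the sentence preceding the corollary, while you spell out the details (monotonicity of $k\mapsto|\varphi^k(w)|$, and that $S^n$ is again a PDF0L system), but the argument is the same.
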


As shown in \Cref{e:non-circ-power} this may fail when the system is not strongly circular. We do not know if strong circularity is stable under taking powers.

The rest of the section will be devoted to the proof of \Cref{t:circ-repet}, which follows a similar structure as the injective case from \cite{Klouda2019}. Several new steps are needed to account for the lack of injectivity, but some arguments have also been streamlined, in particular in the proof of the first part.

\subsection{Preparatory results for part 1}
\label{s:pre-weakly}

This subsection prepares the proof of part \ref{i:weakly} of \Cref{t:circ-repet}. We start with a simple technical lemma which extends \cite[Lemma~8]{Klouda2019}.
\begin{lemma}\label{l:bounded-sync}
	If $S$ is a PDF0L system, then every long enough bounded word in $\lan(S)$ is weakly synchronized.
\end{lemma}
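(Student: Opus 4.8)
The plan is to exploit the rigid structure of bounded words. First I would let $B\subseteq\A$ denote the set of \emph{bounded letters}, i.e.\ those $a$ with $\sup_k|\varphi^k(a)|<\infty$. Since $\varphi$ is non-erasing, a word is bounded exactly when all of its letters lie in $B$, and there is a uniform bound $M=\max\{|\varphi^k(a)|:a\in B,\ k\ge 0\}<\infty$. Within $B$ I would single out the \emph{periodic} letters $B_0$, those $a$ with $|\varphi^k(a)|=1$ for every $k$; on $B_0$ the morphism acts by a permutation (each image is a single letter), so all blocks $\varphi(a)$ with $a\in B_0$ have length $1$. The crucial structural fact to establish is that the non-periodic bounded letters are \emph{local}: there is a constant $R$ (on the order of $(\maxlen\varphi)^{|B|}$) such that in any $\varphi^k(x)$, $x\in W$, every occurrence of a letter of $B\setminus B_0$ lies within distance $R$ of an occurrence of an unbounded letter. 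Consequently, in any bounded word $u\in\lan(S)$ every non-periodic letter sits within distance $R$ of one of the two ends of $u$, so as soon as $|u|>2R$ the entire middle of $u$ consists of periodic letters.

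Granting this, I would prove weak synchronization directly. Fix a bounded $u\in\lan(S)$ with $|u|$ large and consider any minimal interpretation $(s,w,t)$; by \Cref{r:sync-factor} it suffices to treat these. All interior letters of $w$ are bounded (otherwise $u$ would contain an unbounded letter), and applying the structural fact to $w\in\lan(S)$ shows that every non-periodic letter of $w$ lies within distance $R$ of an end of $w$. Hence all but roughly the first and last $R$ letters of $w$ are periodic, so the blocks coming from this periodic middle all have length $1$, and in the corresponding sub-interval of $u$ every integer position is a block boundary of $w$. Translating through $\varphi$ and using $|s|,|t|<\maxlen\varphi$, there is a constant $A$ (on the order of $R\maxlen\varphi$) such that every position of $u$ in $[A,|u|-A]$ is a block boundary of $w$. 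Taking $|u|>2A$ and choosing the single cut point $\ell_0=A$, the factorization $u=u'u''$ with $|u'|=\ell_0$ is compatible with every minimal interpretation, and therefore $u$ is weakly synchronized.

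The main obstacle is the structural fact about locality of non-periodic letters. I would prove it by an ancestor argument based on the height function $h$ defined by $h(a)=0$ for $a\in B_0$ and $h(a)=1+\max_{b\in\alphabet(\varphi(a))}h(b)$ for $a\in B\setminus B_0$; here $h$ is well defined because the ``occurs-in-image'' relation restricted to $B\setminus B_0$ is acyclic (a cycle would force some $|\varphi^k(a)|$ to grow, contradicting boundedness). Tracing the unique ancestor of a fixed occurrence up through successive preimages, one checks that a non-periodic letter can only have a non-periodic bounded ancestor or an unbounded one, and that $h$ strictly increases along bounded ancestors; hence after at most $|B|$ steps the ancestor is unbounded, save for the case where the chain reaches an axiom, which is handled separately since axioms generate only boundedly long bounded regions. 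The delicate points are controlling how distances expand under $\varphi$ when passing between levels and pinning down the constants $R$ and $A$; once the structural fact is in hand, the synchronization argument is routine, relying only on the fact that runs of periodic letters give length-$1$ blocks and hence a block boundary at every position.
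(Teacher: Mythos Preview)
Your approach is correct but takes a genuinely different route from the paper. The paper's own proof is a two-line reduction: it invokes \cite[Lemma~8]{Klouda2019}, which already establishes the statement for PD0L systems (single axiom), and then passes to the D0L system $T=(\A,\varphi,\prod_{w\in W}w)$; since $\lan(S)\subseteq\lan(T)$, every interpretation in $S$ is also an interpretation in $T$, so a weakly synchronizing pair in $T$ remains weakly synchronizing in $S$.

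Your argument, by contrast, is self-contained: the height function on $B\setminus B_0$, the ancestor chain bounding the number of levels before an unbounded letter is reached, and the resulting bound $R\approx(\maxlen\varphi)^{|B|}$ together yield an explicit structural description of long bounded factors (a periodic core flanked by short non-periodic wings) that the paper hides inside the citation. This buys you an explicit constant and avoids dependence on the earlier paper, at the cost of considerably more work. One minor slip: $\varphi$ need not act as a \emph{permutation} on $B_0$ (it may fail to be injective there), but you only use that images of letters in $B_0$ have length~$1$, which is exactly what $B_0$ guarantees. The points you flag as delicate---controlling how distances expand when pushing an ancestor forward at most $|B|$ levels, and handling the finitely many words arising from axioms at low depth---are genuine but routine once acyclicity of the occurs-in-image relation on $B\setminus B_0$ is in hand.
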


\begin{proof}
	By \cite[Lemma~8]{Klouda2019}, the result holds for PD0L systems. Applying it to $T = (\A, \varphi, \prod_{w \in W} w)$ and noting that $\lan(S)\subseteq\lan(T)$, it follows that a word in $\lan(S)$ weakly synchronized in $T$ is also weakly synchronized in $S$.
	\qed
\end{proof}

Next we recall some results from \cite{Klouda24}. Let $\varphi: \A^*\to\A^*$ be a morphism and denote by $\alphabet(w)$ the set of all letters which occur in $w$. A subalphabet $\B\subseteq\A$ is called $p$-\emph{invariant} if $\B$ contains an unbounded letter and $\B = \bigcup_{a\in\B}\alphabet(\varphi^p(a))$ \cite[Definition~4]{Klouda24}.
For $p$ fixed, we say that $\B$ is \emph{minimal} if it is minimal for inclusion among all $p$-invariant subalphabets.

\begin{lemma}[{\cite[Lemmas 5 and 6]{Klouda24}}]\label{l:integer-p}
	Let $\varphi$ be a morphism on $\A$. There exists $p\geq 1$ such that 
	$\alphabet(\varphi^p(a)) = \alphabet(\varphi^{pk}(a))$ for any $a \in \A$ and positive $k$.
	In particular, if $a$ is unbounded, $\B = \alphabet(\varphi^p(a))$ is $p$-invariant.
	Moreover, if $\B$ is minimal, then $\B=\alphabet(\varphi^p(g))$ for every unbounded letter $g\in\B$.
\end{lemma}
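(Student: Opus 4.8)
The final statement to prove is Lemma~\ref{l:integer-p}, which is cited from~\cite[Lemmas 5 and 6]{Klouda24}. Let me reconstruct how I would prove it.

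The plan is to establish the lemma in three parts, corresponding to its three assertions. First I would prove the existence of the stabilizing power $p$. The key observation is that the map $a \mapsto \alphabet(\varphi(a))$ induces a monotone operator on subsets of $\A$: for each $a$, the sequence $\alphabet(\varphi^n(a))$ eventually becomes periodic because $\A$ is finite, so there are only finitely many possible values for $\alphabet(\varphi^n(a)) \subseteq \A$. More precisely, I would consider the function $F \from 2^\A \to 2^\A$ defined by $F(\B) = \bigcup_{a \in \B} \alphabet(\varphi(a))$, which is monotone. The sequence of alphabets $\alphabet(\varphi^n(a))$ is governed by iterating a related set-valued map; since there are finitely many subsets of $\A$, each sequence $(\alphabet(\varphi^n(a)))_n$ is eventually periodic. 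Taking $p$ to be a common multiple of all the eventual periods (and large enough to be past all the preperiods simultaneously) would give $\alphabet(\varphi^p(a)) = \alphabet(\varphi^{pk}(a))$ for every $a$ and every $k \geq 1$.

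Second, I would verify the $p$-invariance claim. If $a$ is unbounded, then $\B = \alphabet(\varphi^p(a))$ contains an unbounded letter (at least one letter in $\varphi^p(a)$ must be unbounded, since otherwise $\varphi^p(a)$ and all its further images stay over a bounded alphabet, contradicting that $a$ is unbounded). To check $\B = \bigcup_{b \in \B} \alphabet(\varphi^p(b))$, I would use the stabilization: $\bigcup_{b \in \alphabet(\varphi^p(a))} \alphabet(\varphi^p(b)) = \alphabet(\varphi^{2p}(a)) = \alphabet(\varphi^p(a)) = \B$, where the middle equality is exactly the $k=2$ case of the first part. This is a clean computation once the first part is in hand.

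Third, and this is where I expect the main obstacle, I would prove the minimality statement: if $\B$ is minimal among $p$-invariant subalphabets, then $\B = \alphabet(\varphi^p(g))$ for every unbounded letter $g \in \B$. The idea is that for any unbounded $g \in \B$, the set $\alphabet(\varphi^p(g))$ is itself $p$-invariant (by the second part, applied to $g$) and is contained in $\B$, because $g \in \B$ and $\B$ is $p$-invariant forces $\alphabet(\varphi^p(g)) \subseteq \bigcup_{b\in\B}\alphabet(\varphi^p(b)) = \B$. Minimality of $\B$ then forces $\alphabet(\varphi^p(g)) = \B$. The delicate point requiring care is ensuring that every unbounded letter of $\B$ indeed generates a \emph{nonempty} and genuinely $p$-invariant subalphabet, which again reduces to the stabilization property; the subtlety is confirming that $g$ unbounded guarantees $\alphabet(\varphi^p(g))$ contains an unbounded letter so that it qualifies as $p$-invariant, allowing the minimality comparison to apply. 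Since the present paper merely cites this as an external lemma, the honest write-up would simply refer to~\cite{Klouda24} rather than reproduce these arguments, but the sketch above is the route I would follow to reconstruct it.
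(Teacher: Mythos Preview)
Your proposal is correct, and you rightly note at the end that the paper itself gives no proof of this lemma: it is imported wholesale from \cite[Lemmas~5 and~6]{Klouda24}, so there is nothing in the paper to compare against beyond the bare citation. Your reconstruction is the standard one: the map $F\from 2^\A\to 2^\A$, $F(\B)=\bigcup_{b\in\B}\alphabet(\varphi(b))$, is a self-map of a finite set, hence some power $F^p$ is idempotent, which yields the stabilization $\alphabet(\varphi^{pk}(a))=\alphabet(\varphi^p(a))$; the $p$-invariance of $\alphabet(\varphi^p(a))$ for unbounded $a$ is then the $k=2$ instance, and the minimality clause follows because $\alphabet(\varphi^p(g))$ is a $p$-invariant subalphabet contained in $\B$. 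All three steps are sound, including the point you flag as delicate (that $g$ unbounded forces $\alphabet(\varphi^p(g))$ to contain an unbounded letter, so it qualifies for the minimality comparison).
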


Given a morphism $\varphi$, we say that a sequence of words $(w_i)_{i\in\nn}$ is \emph{pushy} if the set of bounded words which appear as factors in the words $w_i$, $i\in\nn$, is finite, and otherwise we say that $(w_i)_{i\in\nn}$ is \emph{non-pushy}~\cite[Definition 8]{Klouda24}.

\begin{lemma}[{\cite[Corollary 10]{Klouda24}}]\label{l:non-pushy-sequence}
	Let $\varphi$ be a morphism and $(w_i)_{i\in\nn}$ be a non-pushy sequence of words such that $\lim_{i\to\infty}|w_i|\to\infty$. Let $p$ be the integer of \Cref{l:integer-p}. There exists a minimal $p$-invariant subalphabet $\B$ and an unbounded letter $g \in \B$ with the following property:
	for all $n\in\nn$, there exists $k\in\nn$ such that $|\varphi^{pk}(g)|>n$ and $\varphi^{pk}(g)$ occurs as a factor in some $w_i$, $i\in\nn$.
\end{lemma}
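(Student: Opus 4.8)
The plan is to distil from the whole sequence a single unbounded letter whose iterated images are forced to occur as factors of unbounded length, and then to relocate that letter inside a minimal $p$-invariant subalphabet. Throughout I work with $\psi=\varphi^p$, which has the same bounded letters and bounded words as $\varphi$ and whose iterates $\psi^k=\varphi^{pk}$ produce exactly the images $\varphi^{pk}(g)$ demanded by the statement. Recall that in this setting each $w_i$ is a factor of some iterate $\varphi^{N_i}(x_i)$ with $x_i$ from a fixed finite set, so that $\lim_i|w_i|=\infty$ forces $N_i\to\infty$; absorbing $N_i\bmod p$ into the root we may write $w_i$ as a factor of $\psi^{M_i}(y_i)$ with $M_i\to\infty$. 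Finally, non-pushiness provides a uniform bound $M$ on the length of any factor of the $w_i$ all of whose letters are bounded. The proof then has three stages: a combinatorial core, a pigeonhole step, and an algebraic descent.

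For the core I look at the $\psi$-derivation tree of $\psi^{M_i}(y_i)$ and identify $w_i$ with a contiguous block of leaves. Since every maximal run of bounded letters has length at most $M$, there is a leaf $\pi$ carrying an unbounded letter within distance $M$ of the centre of $w_i$. I then climb from $\pi$ to its ancestor $g_i$ at height $D_i=\lfloor\log_{\maxlen{\psi}}(|w_i|/2-M)\rfloor$: the subtree of $g_i$ is an interval of at most $\maxlen{\psi}^{D_i}\le|w_i|/2-M$ leaves containing $\pi$, hence lies entirely inside the block $w_i$, so $\psi^{D_i}(g_i)=\varphi^{pD_i}(g_i)$ is a factor of $w_i$; and $g_i$ is unbounded, because its subtree already contains the unbounded leaf $\pi$. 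As $|w_i|\to\infty$ we get $D_i\to\infty$. This geometric control — guaranteeing that a full image of unbounded depth can be placed inside an arbitrary long factor, rather than straddling its boundary — is the step I expect to be the main obstacle.

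Since $\A$ is finite, some fixed unbounded letter $g_0$ equals $g_i$ for infinitely many $i$; along this subsequence $\varphi^{pD_i}(g_0)$ is a factor of $w_{\sigma(i)}$ with $D_i\to\infty$. By \Cref{l:integer-p} the set $\alphabet(\varphi^p(g_0))$ is $p$-invariant; among its unbounded letters I choose one, $g$, minimising $\alphabet(\varphi^p(g))$ for inclusion and set $\B=\alphabet(\varphi^p(g))$. Then $\B$ is minimal: a strictly smaller $p$-invariant $\B'$ would contain an unbounded $g'$ with $\alphabet(\varphi^p(g'))\subseteq\B'\subsetneq\B\subseteq\alphabet(\varphi^p(g_0))$, contradicting the choice of $g$. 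Applying \Cref{l:integer-p} to the minimal $\B$ yields an unbounded $g^\ast\in\B$ with $\B=\alphabet(\varphi^p(g^\ast))\ni g^\ast$, so $g^\ast\in\alphabet(\psi^k(g^\ast))$ for all $k\ge1$ and therefore $\psi^k(g^\ast)$ is a factor of $\psi^{k'}(g^\ast)$ whenever $k\le k'$. Since $g$ occurs in $\psi(g_0)$ and $g^\ast$ in $\psi(g)$, the letter $g^\ast$ occurs in $\psi^2(g_0)$, whence $\psi^{D_i-2}(g^\ast)$ is a factor of $\psi^{D_i}(g_0)$ and hence of $w_{\sigma(i)}$. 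Combining the two factor relations: given $n$, pick $k$ with $|\varphi^{pk}(g^\ast)|>n$ (possible as $g^\ast$ is unbounded) and then $i$ with $D_i-2\ge k$; then $\varphi^{pk}(g^\ast)=\psi^k(g^\ast)$ is a factor of $\psi^{D_i-2}(g^\ast)$, hence of $w_{\sigma(i)}$. This is exactly the assertion, with this $g^\ast$ and $\B$.
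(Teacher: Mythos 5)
The paper offers no internal proof of this lemma to compare against: it is imported wholesale from \cite[Corollary~10]{Klouda24}, with only \Cref{l:integer-p} quoted as supporting material. So your proposal is a reconstruction, and it is essentially a correct one. Your three stages all check out: the derivation-tree climb produces an unbounded ancestor $g_i$ whose full image $\psi^{D_i}(g_i)$ sits strictly inside $w_i$ (the interval spanned by the subtree has length at most $\maxlen{\psi}^{D_i}\leq |w_i|/2-M$ and contains a leaf within $M$ of the centre, so it cannot straddle the boundary); the pigeonhole on the finite alphabet fixes $g_0$; and your descent to a minimal $p$-invariant subalphabet is valid, since $\B=\alphabet(\varphi^p(g))\subseteq\B_0$ by $p$-invariance of $\B_0$, every $p$-invariant $\B'\subsetneq\B$ contains an unbounded letter by definition, and that letter is then an eligible candidate contradicting your inclusion-minimal choice of $g$; the final relay $g^\ast\in\alphabet(\psi(g^\ast))$, which gives $\psi^k(g^\ast)$ as a factor of $\psi^{k'}(g^\ast)$ for $k\leq k'$, is exactly what the second half of \Cref{l:integer-p} is for. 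You were also right to make explicit the hypothesis, only implicit in the statement as printed, that each $w_i$ is a factor of some $\varphi^{N_i}(x_i)$ with $x_i$ in a fixed finite set: without it the lemma is false (take the Thue--Morse morphism and $w_i=a^i$, which is vacuously non-pushy), and this provenance is present in the paper's sole application, where $w_i\in\lan(S)$. Likewise you silently adopted the intended reading of non-pushy (finitely many bounded factors, hence a uniform bound $M$ on bounded runs); the paper's definition sentence literally says the opposite, but your reading is the one consistent with all of its later uses.

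Two cosmetic repairs you should make. First, the ancestor at height $D_i$ exists only if $D_i\leq M_i$, and your bound gives only $D_i\leq M_i+O(1)$; replace $D_i$ by $\min(D_i,M_i)$ (at height $M_i$ the subtree still spans at most $\maxlen{\psi}^{M_i}\leq\maxlen{\psi}^{D_i}\leq |w_i|/2-M$ leaves, and both quantities tend to infinity, so nothing is lost). Second, for the logarithm to make sense you need $\maxlen{\psi}\geq 2$; this holds because a non-pushy sequence with $|w_i|\to\infty$ forces unbounded letters to exist, but it deserves a sentence.
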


\subsection{Preparatory results for part 2}

We now move on to results used in the proof of part \ref{i:strongly} of \Cref{t:circ-repet}. We first recall the following notion due to Ehrenfeucht and Rozenberg~\cite{Ehrenfeucht1978a}.
\begin{definition}\label{de:simplification}
	Let $\A$ and $\B$ be finite alphabets and $\varphi\from \A^* \to \A^*$, $\psi\from \B^* \to \B^*$, $\alpha\from\A^*\to\B^*$, and $\beta\from\B^*\to\A^*$ be morphisms. We say that $\varphi$ and $\psi$ are \emph{twined} with respect to $(\alpha,\beta)$ if $\beta\circ\alpha = \varphi$ and $\alpha\circ\beta=\psi$. If $\card\B<\card\A$ then we say that $\psi$ is a \emph{simplification} of $\varphi$ with respect to $(\alpha,\beta)$.
\end{definition}

Every non-injective morphism has an injective simplification~\cite[Corollary~1]{Ehrenfeucht1978a}. Moreover if $\varphi$ and $\psi$ are twined with respect to $(\alpha,\beta)$, then for all $k \in \nn$
\begin{equation} \label{eq:simplification_commutes}
	\alpha \circ \varphi^k = \psi^k \circ \alpha, \qquad \varphi^k \circ \beta = \beta \circ \psi^k.
\end{equation}

\begin{lemma}\label{l:simplification-language}
	Let $S = (\A,\varphi,W)$ be a PDF0L system and let $\psi\from\B^*\to \B^*$ be a morphism. Assume that $\varphi$ and $\psi$ are twined with respect to $(\alpha,\beta)$. The system $T = (\B,\psi,\alpha(W))$ satisfies $\alpha(\lan(S))\subseteq\lan(T)$ and $\beta(\lan(T))\subseteq\lan(S)$.
\end{lemma}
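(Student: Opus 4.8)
The plan is to unwind both inclusions directly from the definition of the language together with the two commutation identities in \eqref{eq:simplification_commutes}. Recall that $u\in\lan(S)$ means precisely that $u$ is a factor of $\varphi^k(w)$ for some $k\geq 0$ and some $w\in W$, and likewise $x\in\lan(T)$ means that $x$ is a factor of $\psi^k(\alpha(w))$ for some $k\geq 0$ and $w\in W$. The only structural fact needed beyond the twining hypothesis is that applying a morphism to a word preserves the factor relation.

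For the first inclusion $\alpha(\lan(S))\subseteq\lan(T)$, I would take $u\in\lan(S)$, write $\varphi^k(w)=pus$ with $w\in W$ and $p,s\in\A^*$, and apply $\alpha$ to get $\alpha(\varphi^k(w))=\alpha(p)\,\alpha(u)\,\alpha(s)$. By the first identity of \eqref{eq:simplification_commutes} the left-hand side equals $\psi^k(\alpha(w))$, so $\alpha(u)$ is a factor of $\psi^k(\alpha(w))$; since $\alpha(w)$ is an axiom of $T$, this yields $\alpha(u)\in\lan(T)$.

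For the second inclusion $\beta(\lan(T))\subseteq\lan(S)$, I would take $x\in\lan(T)$, write $\psi^k(\alpha(w))=pxs$ with $w\in W$ and $p,s\in\B^*$, and apply $\beta$. Using the second identity $\varphi^k\circ\beta=\beta\circ\psi^k$ together with $\beta\circ\alpha=\varphi$, the image $\beta(\psi^k(\alpha(w)))$ rewrites as $\varphi^k(\beta(\alpha(w)))=\varphi^{k+1}(w)$. Hence $\beta(x)$ is a factor of $\varphi^{k+1}(w)$ with $w\in W$, so $\beta(x)\in\lan(S)$.

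Neither inclusion presents a genuine difficulty: the proof is bookkeeping driven by the commutation relations. The one point deserving care is the exponent shift in the second inclusion, where composing with $\beta\circ\alpha=\varphi$ turns a factor of $\psi^k(\alpha(w))$ into a factor of $\varphi^{k+1}(w)$ rather than $\varphi^k(w)$; this is harmless because the language is defined with an existential quantifier over the power. I would also remark that both arguments use only that $\alpha$ and $\beta$ are morphisms (hence map factors to factors) and the twining identities, and in particular do not require $\psi$ to be an actual simplification or $\varphi$ to be non-injective.
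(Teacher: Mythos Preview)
Your proof is correct and follows essentially the same approach as the paper: both inclusions are obtained from the commutation identities \eqref{eq:simplification_commutes} together with the fact that morphisms preserve the factor relation. The only cosmetic difference is in the second inclusion: the paper packages your exponent shift $k\mapsto k+1$ by introducing the auxiliary system $S'=(\A,\varphi,\varphi(W))$ and invoking the first inclusion with the roles of $(\varphi,\alpha)$ and $(\psi,\beta)$ swapped, whereas you carry out the same computation directly.
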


\begin{proof}
	First let $v \in \lan(S)$. Then $v$ is a factor of $\varphi^n(w)$ for some $n\in\nn$ and $w \in W$, hence
	$\alpha(v)$ is a factor of $\alpha \circ \varphi^n(w) = \psi^n \circ \alpha(w)$, where the last equality follows from \eqref{eq:simplification_commutes}. By definition of $T$ we have $\alpha(v) \in \lan(T)$. Next let $S' = (\A,\varphi,\varphi(W))$. Then $\lan(S')\subseteq\lan(S)$ and by the first inclusion $\beta(\lan(T))\subseteq\lan(S')\subseteq\lan(S)$.
	\qed
\end{proof}

We say that two words $u$, $v$ are conjugate, denoted $u \sim v$, if $u=xy$ and $v=yx$ for some $x,y\in\A^*$. Equivalently, $u\sim v$ when $|u|=|v|$ and $\rho(u)\sim\rho(v)$. Notice that if $u \sim v$ and $u \in \repets(S)$, then $v \in \repets(S)$. Next we give a series of two lemmas. The first is a rephrasing of \cite[Lemma 4]{KlSt13}.

\begin{lemma}\label{le:pigeonhole}
	Let $\psi\from\B^*\to\A^*$ be an injective morphism and
	$z\in\B^*, v\in\A^*$ such that $v$ is primitive and $\psi(z)$ is a factor of $v^k$, $k\geq 1$.
	If $|z| \geq (\ell+1)|v|$ and $\ell \geq 2$, then
	there is a primitive word $u$ such that
	$u^\ell$ is a factor of $z$ and $\rho(\psi(u)) \sim v$.
\end{lemma}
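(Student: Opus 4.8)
The plan is to transfer the periodicity of $\psi(z)$ back to $z$ through the injective morphism $\psi$, by a pigeonhole argument on the positions of the letter boundaries of $z$ inside the periodic word $vvv\cdots$. Write $z=b_1\cdots b_m$ with $m=|z|$ and $n=|v|$, and fix an embedding of $\psi(z)$ as a factor of the infinite periodic word $vvv\cdots$. To each boundary $i\in\{0,\dots,m\}$ I associate its \emph{phase} $\theta_i\in\{0,\dots,n-1\}$, namely the position modulo $n$ occupied by the end of $\psi(b_1\cdots b_i)$ in that periodic word. Since there are $m+1\geq(\ell+1)n+1$ boundaries and only $n$ phases, some phase $\theta$ is shared by at least $\lceil(m+1)/n\rceil\geq\ell+2$ boundaries. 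I take the first $\ell+2$ of these, say $i_0<\cdots<i_{\ell+1}$ (consecutive among the phase-$\theta$ boundaries), and consider the factors $f_t=b_{i_{t-1}+1}\cdots b_{i_t}$ for $1\leq t\leq\ell+1$. Each $f_t$ is nonempty, and since it runs from phase $\theta$ back to phase $\theta$, its image is a power $\psi(f_t)=v_\theta^{\,c_t}$ with $c_t\geq1$, where $v_\theta$ denotes the conjugate of $v$ beginning at phase $\theta$ (a primitive word, since $v$ is primitive).

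The crux is to show that these $\ell+1$ factors all coincide. First, at the level of images: $\psi(f_s)$ and $\psi(f_t)$ are both powers of the primitive word $v_\theta$, hence commute, so $\psi(f_sf_t)=\psi(f_tf_s)$; by injectivity of $\psi$ this gives $f_sf_t=f_tf_s$. Thus the $f_t$ pairwise commute, and therefore share a single primitive root $g$, so that $f_t=g^{d_t}$ for some $d_t\geq1$. It remains to rule out $d_t\geq2$, and this is exactly where the \emph{first-return} property of the $f_t$ (no phase-$\theta$ boundary strictly inside them) enters: from $\psi(g)^{d_t}=v_\theta^{\,c_t}$ one deduces $\rho(\psi(g))=v_\theta$, so $\psi(g)$ is itself a power of $v_\theta$ and $|\psi(g)|$ is a multiple of $n$; consequently, if $d_t\geq2$, the letter boundary separating the first two copies of $g$ inside $f_t$ would again have phase $\theta$, contradicting that $i_{t-1}$ and $i_t$ are consecutive phase-$\theta$ boundaries. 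Hence $d_t=1$ and $f_1=\cdots=f_{\ell+1}=g$.

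It then follows that $g^{\ell+1}=b_{i_0+1}\cdots b_{i_{\ell+1}}$ is a factor of $z$, so $u:=g$ is a primitive word with $u^\ell$ a factor of $z$; and since $\psi(u)=v_\theta^{\,c_1}$ is a power of $v_\theta$, we have $\rho(\psi(u))=v_\theta\sim v$, as required. I expect the main obstacle to be the crux step of the second paragraph: the passage from commuting \emph{images} to a common primitive root of the \emph{preimages} relies essentially on injectivity of $\psi$ (which also makes $\psi$ nonerasing, so that $c_t\geq1$), while the elimination of the higher powers $d_t\geq2$ rests on the minimality encoded in the first-return factors. The length hypothesis $|z|\geq(\ell+1)|v|$ is used only to make the initial pigeonhole yield $\ell+2$ boundaries of a common phase.
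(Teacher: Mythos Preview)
Your proof is correct and follows essentially the same pigeonhole-on-phases argument as the paper: both track the residues modulo $|v|$ of the images of prefixes of $z$, use injectivity of $\psi$ to deduce that the resulting blocks share a common primitive root, and conclude that a high power of that root occurs in $z$. The one difference is that you additionally restrict to \emph{consecutive} phase-$\theta$ boundaries and invoke the first-return property to force $d_t=1$; this refinement is correct but unnecessary, since already $f_1\cdots f_{\ell+1}=g^{\sum d_t}$ with $\sum d_t\geq \ell+1$ gives $u^\ell$ as a factor of $z$ without ever needing $d_t=1$.
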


\begin{proof}
	Take $v_1$ such that $v_1 \sim v$ and $\psi(z)$ is a prefix of $v_1^k$.
	Let $v_1 = p_is_i$ with $|p_i| = i$ for $0\leq i <|v|$. For each $j$, let $q_j$ be the prefix of length $j$ of $z$, and let $k(j)$ and $i(j)$ be the integers such that $\psi(q_j) = {v_1}^{k(j)}p_{i(j)}$.

	As $|z| \geq (\ell+1)|v|$, by the pigeonhole principle, there is a strictly increasing sequence $j_1<j_2<\dots<j_s$ such that $s>\ell$ and $i(j_{t}) = i(j_{1})$ for all $1\leq t\leq s$.
	Observe that $k(j_t)<k(j_{t+1})$, otherwise we would have $\psi(q_{j_t}) = \psi(q_{j_{t+1}})$, which would contradict injectivity.
	If we let $v_2 = s_{i(j_1)}p_{i(j_1)}$, then
	\begin{align*}
		\psi(q_{j_{t+1}}) & = v_1^{k(j_{t+1})}p_{i(j_{t+1})} = v_1^{k(j_{t})}v_1^{k(j_{t+1})-k(j_t)}p_{i(j_{t})}    \\
		                  & = v_1^{k(j_{t})}p_{i(j_t)}v_2^{k(j_{t+1})-k(j_t)} = \psi(q_{j_t})v_2^{k(j_{t})-k(j_1)}.
	\end{align*}

	Let $r_t$ be the suffix of length $j_{t+1}-j_t$ of $q_{j_{t+1}}$, for $t \in \{1,\dots, s-1\}$.
	Then by the above equality we have $\psi(r_t) = v_2^{k(j_{t+1})-k(j_{t})}$. Let $u=\rho(r_1)$. Note that $\rho(\psi(r_t))=\rho(\psi(r_1))=v_2$, which implies that $\rho(r_t)=\rho(r_1)=u$ since $\psi$ is injective. Thus $z$ has a factor $r_1\cdots r_{s-1} = u^n$ for some $n\geq\ell$. Since $\rho(\psi(u)) = \rho(\psi(r_1)) = v_2\sim v$, this concludes the proof.
	\qed
\end{proof}

\begin{lemma}\label{l:lift-repet}
	Let $S = (\A,\varphi,W)$ be a PDF0L system. For all $v\in\repets(S)$, there exists $u\in\repets(S)$ such that $\rho(\varphi(u))\sim v$.
\end{lemma}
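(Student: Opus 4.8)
The plan is to reduce the non-injective morphism $\varphi$ to injective data and then lift repetitions through injective morphisms only. By \cite{Ehrenfeucht1978a}, take an injective simplification $\psi\from\B^*\to\B^*$ of $\varphi$, twined with respect to $(\alpha,\beta)$, so that $\varphi=\beta\circ\alpha$ and $\psi=\alpha\circ\beta$. Since $\psi$ is injective it is non-erasing, and from $\varphi$ being non-erasing one checks that $\alpha$ and $\beta$ are non-erasing as well; moreover $\beta$ is injective because $\psi=\alpha\circ\beta$ is (while $\alpha$ itself may fail to be injective, which is why the argument never pulls a repetition back through $\alpha$). Let $T=(\B,\psi,\alpha(W))$, a PDF0L system with $\alpha(\lan(S))\subseteq\lan(T)$ and $\beta(\lan(T))\subseteq\lan(S)$ by \Cref{l:simplification-language}. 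I would then prove the lemma in three moves: (i) pull $v$ back through the injective morphism $\beta$ to a word $\bar v\in\repets(T)$ with $\rho(\beta(\bar v))\sim v$; (ii) find inside $T$ a word $y'\in\repets(T)$ with $\rho(\psi(y'))\sim\bar v$; (iii) push $y'$ forward through $\beta$ and verify the claim for $u=\rho(\beta(y'))$.

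Both (i) and (ii) rest on the same mechanism, which I would isolate: given an injective morphism $\theta$ and a primitive word $p$, if there are words $z_k$ in the ambient (factor-closed) language with $\theta(z_k)$ containing $p^k$ and $|z_k|\to\infty$, then there is a primitive $q$ with $q^m$ in the language for all $m$ and $\rho(\theta(q))\sim p$. Indeed, after trimming $z_k$ by boundedly many letters (using that $\theta$ is non-erasing) to a factor $\bar z_k$ with $\theta(\bar z_k)$ a factor of $p^k$, \Cref{le:pigeonhole} applied to $\theta$ and $p$ yields, for each $\ell$ and $k$ large, a primitive $q_\ell$ with $q_\ell^\ell$ a factor of $z_k$ (hence in the language) and $\rho(\theta(q_\ell))\sim p$. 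The crucial finiteness comes from injectivity: if $q_1,q_2$ are primitive with $\rho(\theta(q_1))=\rho(\theta(q_2))$, then suitable equal powers of $\theta(q_1)$ and $\theta(q_2)$ coincide, so $q_1=q_2$; thus $q\mapsto\rho(\theta(q))$ is injective into the at most $|p|$ conjugates of $p$, and only finitely many distinct $q_\ell$ are possible. A pigeonhole over $\ell\to\infty$ then fixes a single $q$ with $q^m$ in the language for arbitrarily large $m$, hence for all $m$. For (i) I take $\theta=\beta$, using a minimal interpretation $\varphi(w_k)=s_kv^kt_k$ of $v^k$ in $S$ (with $|w_k|\to\infty$ by \Cref{l:inter-length}) and $z_k=\alpha(w_k)\in\lan(T)$, so that $\beta(z_k)=\varphi(w_k)$ contains $v^k$. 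For (ii) I take $\theta=\psi$ and an interpretation of $\bar v^k$ in $T$, which is again a PDF0L system so \Cref{l:inter-length} applies.

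For (iii), set $u=\rho(\beta(y'))$. Since $y'\in\repets(T)$ gives $\beta(y')^m=\beta(y'^m)\in\lan(S)$ for all $m$, we get $u\in\repets(S)$. Writing $\beta(y')=u^e$ and using $\varphi\circ\beta=\beta\circ\psi$ from \eqref{eq:simplification_commutes}, we have $\varphi(u)^e=\beta(\psi(y'))$; since $\rho(\psi(y'))\sim\bar v$, say $\psi(y')=\bar v_2^{\,c}$ with $\bar v_2\sim\bar v$, this gives $\rho(\varphi(u))=\rho(\beta(\bar v_2))\sim\rho(\beta(\bar v))\sim v$, using that conjugacy is preserved by morphisms together with step (i). I expect the main obstacle to be the finiteness in the core mechanism: without injectivity the candidate preimages $q_\ell$ need not be bounded in number, and it is precisely the injectivity of $\beta$ and $\psi$ (inherited from the simplification) that collapses them through the map into the finitely many conjugates of $p$. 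The remaining points — non-erasingness of $\alpha$ and $\beta$, finiteness of interpretations, and the conjugacy bookkeeping through $\alpha$ and $\beta$ — are routine.
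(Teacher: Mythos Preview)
Your proposal is correct and follows essentially the same route as the paper: pass to an injective simplification $\psi$ twined via $(\alpha,\beta)$, lift $v$ through the injective $\beta$ to $\bar v\in\repets(T)$, apply the injective case inside $T$ to get $\bar u$ (your $y'$) with $\rho(\psi(\bar u))\sim\bar v$, and set $u=\rho(\beta(\bar u))$. Your isolated ``core mechanism'' is exactly the argument the paper runs twice, and your finiteness step (injectivity of $q\mapsto\rho(\theta(q))$ on primitive words into the conjugacy class of $p$) is a cleaner phrasing of the paper's claim that the candidates are all conjugate.
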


\begin{proof}
	First we assume that $\varphi$ is injective. Let $v\in\repets(S)$. For each $\ell\in\nn$, fix a word $z_\ell\in\lan(S)$ such that $|z_\ell|\geq (\ell+1)|v|$ and $\varphi(z_\ell)$ is a factor of some $v^{k}$. By \cref{le:pigeonhole} there exists a primitive word $u_\ell$ such that $u_\ell^\ell$ is a factor of $z_\ell$ and $\rho(\varphi(u_\ell))\sim v$. Since $\varphi$ is injective, it follows that the words $u_\ell$, $\ell\in\nn$ are all conjugate, and so there are only finitely many of them. By the pigeonhole principle, there is $u\in\lan(S)$ such that $u = u_\ell$ for infinitely many $\ell$. In particular, $u\in\repets(S)$ and $\rho(\varphi(u))\sim v$.

	Now let us prove the general case. Let $\psi$ be an injective simplification of $\varphi$ with respect to $(\alpha,\beta)$ and consider the system $T = (\B,\psi,\alpha(W))$. Observe that $\beta$ must be injective since $\psi=\alpha\circ\beta$ is, and likewise $\alpha$ must be non-erasing since $\varphi=\beta\circ\alpha$ is. Let $v \in\repets(S)$.

	First, we claim that there exists $\bar v\in\repets(T)$ such that $\rho(\beta(\bar v))\sim v$. Fix $\ell\in\nn$ and choose $z_\ell\in\lan(S)$ such that $\varphi(z_\ell)$ is a factor of $v^{k}$ for some $k$ and $|z_\ell|\geq (\ell+1)|v|$. Let $\bar z_\ell = \alpha(z_\ell)$. As $\alpha$ is non-erasing we have $|\bar z_\ell|\geq(\ell+1)|v|$. Since $\beta$ is injective,  by \cref{le:pigeonhole}, there is a primitive word $\bar v_\ell$ such that $\bar v_\ell^\ell$ is a factor of $\bar z_\ell$ and $\rho(\beta(\bar v_\ell))\sim v$. As in the first part of the proof, we observe that elements in the set $\{\bar v_\ell \mid \ell\in\nn\}$ are all conjugate, and thus by the pigeonhole principle there exists $\bar v$ such that $\bar v = \bar v_\ell$ for infinitely many $\ell$. Since $\bar z_\ell\in\lan(T)$ by \cref{l:simplification-language}, this concludes the proof of the claim.

	Using the first part of the proof, there exists $\bar u\in\repets(T)$ such that $\rho(\psi(\bar u))\sim\bar v$, and hence, $\rho(\beta\circ \psi(\bar u)) \sim \rho(\beta(\bar v)) \sim v$. Finally if we let $u = \rho(\beta(\bar u))$, then
	\begin{equation*}
		\rho(\varphi(u)) = \rho(\varphi\circ\beta(\bar u)) = \rho(\beta\circ \psi(\bar u)) \sim v
	\end{equation*}
	where we used that $\rho(\varphi(x)) = \rho(\varphi(\rho(x))$. Since $\beta(\bar u)\in \lan(S)$ by \cref{l:simplification-language}, this concludes the proof.
	\qed
\end{proof}

Finally we extend \cite[Theorem~15]{KlSt13} to non-injective systems. Given $u\in\A^*$, let $u^\omega$ denote the infinite periodic word $uuu\cdots \in \A^\nn$.

\begin{proposition}\label{p:unbounded_letter_implies_fp}
	Let $S = (\A, \varphi, W)$ be a PDF0L system.
	Let $v \in \repets(S)$ be unbounded.
	There exists an unbounded letter $a\in\A$, an integer $\ell$, $1\leq\ell\leq\card\A$, and a word $u\in a\A^*\cap\lan(S)$ such that $v \sim u$ and $\lim_{k\to\infty} \varphi^{k\ell}(a) = u^\omega$.
\end{proposition}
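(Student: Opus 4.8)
The plan is to produce, from $v$, a primitive unbounded word $w\in\repets(S)$ conjugate to $v$ satisfying $\rho(\varphi^{q}(w))\sim w$ for some $q\ge 1$, and then to convert this into a genuine periodic fixed point of a power of $\varphi$ whose first letter is unbounded. First I would iterate \Cref{l:lift-repet}: starting from $v_0=v$, choose $v_{i+1}\in\repets(S)$ with $\rho(\varphi(v_{i+1}))\sim v_i$. Each $v_i$ is primitive, and since $v$ is unbounded and $\rho(\varphi(v_{i+1}))\sim v_i$ forces $\varphi(v_{i+1})$ to be unbounded, an induction shows every $v_i$ is unbounded. Using the identity $\rho(\varphi(x))=\rho(\varphi(\rho(x)))$ exactly as in the proof of \Cref{l:lift-repet}, the relations telescope to $v\sim\rho(\varphi^{j}(v_j))$ for all $j$.

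Because $\varphi$ preserves conjugacy, the classes $[v_i]$ evolve under the well-defined map $[x]\mapsto[\rho(\varphi(x))]$, and my next step is to argue that only finitely many classes occur, so that $[v_i]=[v_j]$ for some $i<j$. Setting $q=j-i$ and $w=\rho(\varphi^{i}(v_i))$, the telescoped relation gives $w\sim v$, while $v_i\sim v_j$ gives $\rho(\varphi^{q}(w))\sim w$. Unboundedness forces the exponent $m$ with $\varphi^{q}(w)\sim w^{m}$ to satisfy $m\ge 2$, since otherwise $|\varphi^{kq}(w)|$ would be constant. To turn this into an exact relation I would pass to $w^{\omega}$: as $\rho(\varphi^{q}(w))\sim w$, the word $\varphi^{q}(w^{\omega})$ is a cyclic shift of $w^{\omega}$ by a fixed amount modulo $|w|$, so after replacing $q$ by a suitable multiple $q'$ the shift becomes trivial and $\varphi^{q'}(w^{\omega})=w^{\omega}$, equivalently $\varphi^{q'}(w)=w^{m'}$ with $m'\ge 2$.

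It then remains to locate an unbounded \emph{self-starting} letter. From $\varphi^{q'}(w)=w^{m'}$ I would seek a cutting position $i$ carrying an unbounded letter $a=s_i$ such that the conjugate $u$ of $w$ beginning at position $i$ still satisfies $\varphi^{q'}(u)=u^{m'}$; writing $w=s_0\cdots s_{L-1}$, this is precisely the condition $|\varphi^{q'}(s_0\cdots s_{i-1})|\equiv i\pmod{|w|}$, which holds at $i=0$ and, I claim, at some position carrying an unbounded letter because $w$ is unbounded. Granting such $u$, its first letter $a$ is unbounded and $\varphi^{q'}(a)$ begins with $a$; letting $\ell\le\card\A$ be the length of the cycle of $a$ under the first-letter map, we get $\ell\mid q'$ and $\varphi^{\ell}(a)\in a\A^{*}$. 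Since $a$ is growing, $\varphi^{k\ell}(a)$ is prefix-increasing and converges to the unique fixed point of $\varphi^{\ell}$ with first letter $a$; this fixed point is also $\varphi^{q'}$-fixed, and $u^{\omega}$ is the $\varphi^{q'}$-fixed point with first letter $a$, so the two coincide. This yields $\lim_{k}\varphi^{k\ell}(a)=u^{\omega}$ with $u\sim v$ and $u\in a\A^{*}\cap\lan(S)$, as required.

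The hard part will be the two claims I flagged. The main obstacle is the finiteness of $\{[v_i]\}$, i.e. a uniform bound on the periods of unbounded words in $\repets(S)$; without it the backward chain need not cycle. I expect this to follow either from the structure of minimal invariant subalphabets (\Cref{l:integer-p,l:non-pushy-sequence}) or directly from the arguments of \cite{KlSt13}, on which the injective case rests. The secondary obstacle is the combinatorial selection of an unbounded letter at a good cutting position: this should be obtained by descending through the desubstitution of $w^{\omega}$ along unbounded letters until a block boundary realigns with the period, but the alignment bookkeeping requires care.
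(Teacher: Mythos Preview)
Your strategy diverges from the paper's, and the divergence is exactly where your flagged ``main obstacle'' lies. You build an \emph{infinite} backward chain $v_0,v_1,\ldots$ via \Cref{l:lift-repet} and then need finiteness of the conjugacy classes $\{[v_i]\}$ to extract a cycle. But this finiteness statement --- a uniform bound on the periods of the unbounded elements of $\repets(S)$ appearing in the chain --- is essentially a reformulation of what the proposition asserts: once the proposition is proved, each such class is represented by the primitive root of some $\varphi^{k\ell}(a)$ with $a\in\A$ and $\ell\le\card\A$, which \emph{a posteriori} gives the bound. Neither \Cref{l:integer-p}/\Cref{l:non-pushy-sequence} nor a direct length estimate (from $\rho(\varphi(v_{i+1}))\sim v_i$ you only get $|v_{i+1}|$ comparable to $m_i|v_i|$ with uncontrolled $m_i$) yields this, so the argument is circular as it stands. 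There is also a smaller slip: from $\rho(\varphi^q(w))\sim w$ you cannot in general make the cyclic shift trivial by merely enlarging $q$ (the shift map on $\zz/|w|\zz$ need not be a bijection, so $0$ may lie in the tail of its orbit); you must instead replace $w$ by a conjugate lying on the eventual cycle.

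The paper sidesteps the finiteness problem entirely. It applies \Cref{l:lift-repet} a \emph{fixed} number $j$ of times (with $j$ chosen large in terms of $|v|$ and $\card\A$) to obtain a single $z\in\repets(S)$ with $\rho(\varphi^j(z))\sim v$, and then pigeonholes not on conjugacy classes of words but on \emph{letters}: writing $\varphi^i(a_0)=p_ia_iw_i$ with $p_i$ bounded and $a_i$ the first unbounded letter, one gets $a_n=a_{n+\ell}$ with $\ell\le\card\A$. The crucial step is to show $p_i=\emptyw$ for $i>n$, which follows because $\varphi^{q'}(\text{bounded})$ stays bounded while the period of $\varphi^j(z)$ is $|v|$. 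This is precisely the mechanism you would need to fill your ``secondary obstacle'' (if $w=paw'$ with $p$ bounded, then $\varphi^{q'}(p)$ is a bounded prefix of $w^{m'}$, hence equals $p$, so the conjugate starting at $a$ works); so your second gap is recoverable, but by the paper's idea. The takeaway: do a single deep lift and pigeonhole on the finite alphabet, rather than an infinite shallow lift requiring a finiteness statement you do not have.
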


\begin{proof}
	Take $j>\card \A(|v|+2)$ such that $|\varphi^j(a)| \geq |v|$ for all unbounded letters $a\in\A$.
	Applying \Cref{l:lift-repet} $j$ times, there exists $z \in \repets(S)$ such that $\rho(\varphi^{j}(z)) \sim v$.
	As $v$ contains an unbounded letter, so does $z$.
	Set $z = p_0a_0w_0$ where $p_0$ is a bounded word, $a_0$ is an unbounded letter and $w_0 \in \A^*$.
	For $i\geq 1$ set $\varphi^i(a_0) = p_ia_iw_i$ where $p_i$ is a bounded word, $a_i$ is an unbounded letter and $w_i\in\A^*$.
	As $\A$ is finite, there exist integers $n$ and $\ell$ with $n<\card\A$ and $\ell\leq\card\A$ such that $a_n = a_{n+\ell}$.
	It follows that $a_i = a_{i+\ell}$ for all $i \geq n$.

	Assume $p_i \neq \varepsilon$ for some $i$ with $n < i \leq n+\ell$.
	Since $j>\card\A(|v|+2)$, $\varphi$ is non-erasing, and $\ell\leq\card\A$, it follows that $\varphi^{j}(z)$ contains a bounded prefix of length greater than $|v|$.
	Since $|\rho(\varphi^{j}(z))| = |v|$ it follows that $\varphi^{j}(z)$ consists only of bounded letters, a contradiction.
	Hence, $p_i = \varepsilon$ for all $i$ with $n < i \leq n+\ell$.
	We conclude that $p_i = \varepsilon$ for all $i$ with $n < i$.
	Finally, since $\varphi^\ell(a_{j-\ell}) = a_{j-\ell}w$ for some $w \in \A^*$ and we can select $j$ arbitrarily large,
	we conclude that $\rho (\varphi^{\ell k}(a_{j-\ell})) \sim v$ for any $k$ such that $|\varphi^{\ell k}(a_{j-\ell})| \geq |v|$.
	\qed
\end{proof}

\begin{lemma} \label{le:fp_w_to_w}
	Let $\varphi\from\A^*\to\A^*$ be a morphism, $a\in\A$, and $w\in\A^*$ such that $\lim_{k\to\infty} \varphi^k(a) = w^\omega$ and $w$ is primitive. Then $\varphi(w) = w^n$ for some $n>1$.
\end{lemma}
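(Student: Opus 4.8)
The plan is to recognize $x:=w^\omega$ as a fixed point of $\varphi$ and then exploit the primitivity of $w$. First I would record that $a$ is unbounded: since the limit $w^\omega$ is an infinite word, the longest common prefix of $\varphi^k(a)$ with $w^\omega$ has length tending to infinity, and that length is at most $|\varphi^k(a)|$, so $|\varphi^k(a)|\to\infty$.

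The first real step is to show that $\varphi$ fixes $x$, in the sense that $\varphi(w)^\omega = w^\omega$. I would argue this directly from the convergence. For large $k$, $\varphi^k(a)$ has $w^{m_k}$ as a prefix with $m_k\to\infty$; applying the morphism, $\varphi^{k+1}(a)$ has $\varphi(w)^{m_k}$ as a prefix. But $\varphi^{k+1}(a)$ also has $w^{m_{k+1}}$ as a prefix with $m_{k+1}\to\infty$. Hence $\varphi(w)^{m_k}$ and $w^{m_{k+1}}$ are both prefixes of the single word $\varphi^{k+1}(a)$, so they are prefix-comparable; since both lengths tend to infinity (here I use $\varphi(w)\neq\emptyw$, which holds because $\varphi$ is non-erasing and $w\neq\emptyw$), the infinite words $\varphi(w)^\omega$ and $w^\omega$ agree on arbitrarily long prefixes and are therefore equal. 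Equivalently, this says that the continuous extension of the non-erasing morphism $\varphi$ to infinite words sends the limit $x$ to $\varphi(x)=x$.

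Next I would feed in primitivity. From $\varphi(w)^\omega = w^\omega$, both $w$ and $\rho(\varphi(w))$ are primitive roots of the same purely periodic infinite word; since such a root is unique, $\rho(\varphi(w))=w$, whence $\varphi(w)=w^n$ for some $n\geq 1$. Concretely, the minimal period of $u^\omega$ equals $|u|$ when $u$ is primitive, so the two periods $|w|$ and $|\varphi(w)|$ of $w^\omega$ force $|w|$ to divide $|\varphi(w)|$, with $w$ as the repeated block.

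Finally, the main obstacle is to upgrade $n\geq 1$ to $n>1$, i.e. to exclude $\varphi(w)=w$. This is exactly where unboundedness enters, and where one must connect the letter $a$ to the word $w$. The point is that $a$ occurs in $w$: since the limit is the fixed point generated by $a$, the letter $a$ is the first letter of every $\varphi^k(a)$ and hence the first letter of $x=w^\omega$, so $a\in\alphabet(w)$. As $a$ is unbounded, $w$ is then unbounded, i.e. $|\varphi^m(w)|\to\infty$; this is incompatible with $\varphi(w)=w$, which would force $|\varphi^m(w)|=|w|$ for all $m$. Hence $n\neq 1$, and together with $n\geq 1$ we conclude $n>1$. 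I expect this last step---pinning down that $a\in\alphabet(w)$---to be the delicate point, since it is precisely this feature of the hypothesis (rather than the bare prefix-limit) that makes the strict inequality hold: if $a$ did not occur in $w$, a letter mapping into a block of $\varphi$-fixed letters would yield $n=1$.
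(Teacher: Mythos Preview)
Your argument is correct in spirit and reaches the same conclusion, but the route differs from the paper's in a meaningful way. The paper writes $\varphi(w)=w^np$ with $p$ a proper prefix of $w$, then looks at $\varphi(ww)=w^npw^np$ as a prefix of $w^{2n+2}$ and invokes a Fine--Wilf--type fact (Moss\'e's Property~2.3) to force $p=\emptyw$. You instead establish $\varphi(w)^\omega=w^\omega$ and read off $\varphi(w)=w^n$ from the uniqueness of the primitive period of a purely periodic infinite word. Your route is arguably cleaner: it avoids the auxiliary lemma and the doubling trick, at the cost of appealing to the (equally standard) fact that a purely periodic sequence has a unique primitive root.

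For the strict inequality $n>1$, the paper simply asserts $|\varphi(w)|>|w|$ from the limit hypothesis, whereas you argue that $a\in\alphabet(w)$ and use unboundedness of $a$. Your justification ``$a$ is the first letter of every $\varphi^k(a)$'' requires $\varphi(a)\in a\A^*$; this is not forced by the bare prefix-convergence reading of $\lim_k\varphi^k(a)=w^\omega$. Indeed, with $\varphi\colon a\mapsto bc,\ b\mapsto b,\ c\mapsto bc$ one has $\varphi^k(a)=b^kc\to b^\omega$ yet $\varphi(b)=b$, so $n=1$; the lemma itself needs the stronger reading. You already flag this as the delicate point, and in the paper's only application (via Proposition~\ref{p:unbounded_letter_implies_fp}) one has $u\in a\A^*$ and $\varphi^\ell(a)\in a\A^*$, so $a$ is the first letter of $w$ by construction. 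Making this explicit (e.g., noting that $\varphi^0(a)=a$ is a prefix of $w^\omega$) closes the gap. A minor side remark: you appeal to $\varphi$ being non-erasing to get $\varphi(w)\neq\emptyw$, which the lemma does not assume; once $a\in\alphabet(w)$ is established, $\varphi(w)\neq\emptyw$ follows from unboundedness of $a$, so a small reordering avoids the extra hypothesis.
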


\begin{proof}
	Let $p$ be a proper prefix of $w$ and $n\in\nn$ such that
	$\varphi(w) = w^n p$. Since $\lim_{k\to\infty}\varphi^k(a) = w^\omega$, we have $|\varphi(w)| > |w|$ and thus $n\geq 1$. If follows that $\varphi(ww) = w^n p w^n p$ is a prefix of $w^{2n+2}$. By \cite[Property~2.3]{Mosse1992}, this implies $p = \varepsilon$. Thus $\varphi(w) = w^n$, where $n>1$ since $|\varphi(w)|>|w|$.
	\qed
\end{proof}

\subsection{Proof of \Cref{t:circ-repet}}

\paragraph{Part \ref{i:weakly}.}

Assume that $S$ is not weakly circular.
It follows that there exists a sequence $(w_i)_{i\in\nn}$ such that $w_i \in \lan(S)$, $w_i$ is not weakly synchronized, and $|w_i| \to +\infty$ as $i\to\infty$. Since the words $w_i$ cannot have weakly synchronized factors (\Cref{l:extend-sync}), it follows from \Cref{l:bounded-sync} that $(w_i)_{i\in\nn}$ is a non-pushy sequence. Let $p$ be the integer from \Cref{l:integer-p} and let $\psi = \varphi^p$. By \Cref{l:non-pushy-sequence}, we may fix a minimal invariant subalphabet $\B$ and an unbounded letter $g \in \B$ such that for every $n\in\nn$, there is $k\in\nn$ such that $|\psi^{k}(g)|>n$ and $\psi^k(g)$ occurs as a factor in $(w_i)_{i\in\nn}$.
As $\psi^{k}(g)$ is a factor of $w_j$ for some $j$ and $k$, it cannot be weakly synchronized in $S$. By \Cref{p:weak-powers}, it follows that for $n$ large enough, $\psi^k(g)$ is not weakly synchronized in $S^p$.
Moreover, as $\B$ is a minimal invariant subalphabet, it follows that $\psi^{\ell}(g)$ is a factor of $\psi^{k}(g)$ for all $\ell$ with $\ell \leq k$, thus $\psi^\ell(g)$ is not weakly synchronized in $S^p$ for all $\ell\leq k$.
Since this is true for infinitely many integers $k$, we conclude that in fact $\psi^{\ell}(g)$ is not weakly synchronized in $S^p$ for all $\ell\in\nn$.

Fix integers $t,\ell\in\nn$ and put $u = \psi^{t}(g)$.
Observe that $(\varepsilon, \psi^{\ell-1}(u), \varepsilon)$ is an interpretation of $\psi^{\ell}(u)$ in $S^p$ which is compatible with $\left( \varepsilon, \psi^{\ell}(u) \right)$. 
As $\psi^{\ell}(u)$ is not weakly synchronized, we can find an interpretation which is not compatible with the pair $(\varepsilon, \psi^{\ell}(u))$. 
Moreover, observe that $\B\cap\lan(S)$ is extendable (every element has left and right extensions in the language) and thus may take this interpretation to be as long as we need. In particular, we can assume it has the form $(s_\ell, \psi^{\ell-1}(v_\ell), t_\ell)$, for some $v_\ell \in \lan(S)$ such that $s_\ell$ is a prefix of $\psi^{\ell}(a)$ where $a$ is the first letter of $v_\ell$, and $t_\ell$ is a suffix of $\psi^{\ell}(b)$ for the last letter $b$ of $v_\ell$.
Since the interpretation is not compatible with $(\varepsilon, \psi^{\ell}(u))$, we have $s_\ell \neq \varepsilon$.

We claim that the words $v_\ell$ are bounded in length.
Write $v_\ell = a\bar v_\ell b$ where $a,b\in\A$, $\bar v_\ell\in\A^*$. 
Since every letter in $\alphabet(\bar v_\ell)$ is mapped by $\psi^\ell$ to a factor of $\psi(u_\ell)\in\B^*$, \Cref{l:integer-p} implies that $\bar v_\ell\in\B^*$.
Moreover, the sequence $(\psi^{\ell}(u))_{\ell\in\nn}$ is non-pushy, and $\psi^\ell$ maps bounded factors of $\bar v_\ell$ to bounded factors of $\psi^{\ell}(u)$, thus $(\bar v_\ell)_{\ell\in\nn}$ is also non-pushy.
Given $x\in\A^*$, let $\#_\unbound x$ be the number of unbounded letters in $x$. We are reduced to show that $\#_\unbound\bar v_\ell$ is bounded independently of $\ell$.
Since $\B$ is minimal, \Cref{l:integer-p} implies that for all unbounded letters $c\in\B$:
\begin{equation*}
	|\psi^{\ell}(c)|/\lceil\psi \rceil\leq|\psi^{\ell-1}(g)|\leq |\psi^{\ell}(c)|\cdot\lceil\psi \rceil.
\end{equation*}
and therefore
\begin{align*}
	\#_\unbound \bar v_\ell \cdot |\psi^{\ell-1}(g)| \leq |\psi^{\ell}(\bar v_\ell)|\cdot\lceil\psi \rceil \leq |\psi^{\ell}(u)|\cdot\lceil\psi \rceil
	\leq \#_\unbound u\cdot|\psi^{\ell-1}(g)|\cdot\lceil\psi \rceil^2.
\end{align*}
Thus $\#_\unbound v_\ell$ is bounded independently of $\ell$, concluding the proof of the claim.

By the pigeonhole principle, there exists $m$ and $n$ such that $n < m$ and $v_m = v_n = v$,
which yields
\begin{gather}
	\label{eq:m-n}
	\psi^{m-n} (s_n) \psi^{m}(u) \psi^{m-n}(t_n) = \psi^{m}(v) = s_m \psi^{m}(u) t_m, \\
	\label{eq:m-n-1}
	\psi^{m-1}(v) = \psi^{m-n-1}(s_n)\psi^{m-1}(u)\psi^{m-n-1}(t_n).
\end{gather}
Since, by assumption, $(\emptyw,\psi^m(u))$ is not compatible with $(s_m,\psi^{m-1}(v),t_m)$, it follows from \eqref{eq:m-n-1} that $\psi^{m-n}(s_n)\neq s_m$. But by \eqref{eq:m-n}, $\psi^{m-n}(s_n)$ and $s_m$ are prefix comparable, so there is
$z\in\A^+$ such that $s_m = \psi^{m-n}(s_n) z$ or $\psi^{m-n}(s_n)=s_mz$.
In both cases we have from \eqref{eq:m-n} that $\psi^{m}(u)$ is prefix of $z \psi^{m}(u)$. As $\psi^{m-n}(s_n)$ and $s_m$ are prefixes of $\psi^{\ell}(a)$ where $a$ is the first letter of $v$,
we have $|z|<|\psi^{m}(a)|$.

Let us write $u = u_t$, $z = z_t$, $a = a_t$ and $m = m_t$, as all of these objects depend on $t$. As $\B$ is $p$-invariant and $(\psi^{k}(g))_{k\in\nn}$ is a non-pushy sequence, $\psi(g)$ contains at least two unbounded letters.
By \Cref{l:integer-p}, the letter $a_t$ occurs at least $t-1$ times in $\psi^{t}(g)$.
Therefore,
\[
	|\psi^{m_t}(u_t)| = |\psi^{m_t}(\psi^{t}(g))| \geq (t-1) |\psi^{m_t}(a_t)| = (t-1)|z_t|.
\]
Since $z_t\psi^{m_t}(u_t)$ is a prefix of $\psi^{m_t}(u_t)$, this implies that $z_t^{t-1}$ is a prefix of $\psi^{m_t}(u_t)$ for all $t$, and thus $z_t^{t-1}\in \lan(S)$.
As the sequence $(\psi^{\ell}(u_t))_{t\in\nn}$ is non-pushy, $z_t^{t-1}$ must be unbounded, and therefore $S$ is unboundedly repetitive.

\paragraph{Part \ref{i:strongly}.}

Assume that $S$ is unboundedly repetitive, and fix an unbounded element $v\in\repets(S)$. By \Cref{p:unbounded_letter_implies_fp}, there is an unbounded letter $a \in \A$, a positive integer $\ell\leq\card\A$, and a word $u\in a\A^*\cap\lan(S)$ such that $v \sim u$ and $\lim_{k\to\infty} \varphi^{\ell k}(a) = u^\omega$. By \cref{le:fp_w_to_w}, we must have $\varphi^\ell(u) = u^n$ for some $n>1$. Let $j = \min\left\{ i\in\nn \mid \varphi^i(u)\text{ is not primitive} \right \}$. Note that $1\leq j\leq \ell$.
Let $z = \varphi^{j-1}(u)$. Write $\varphi(z) = x^m$ where $x=\rho(\varphi(z))$. It follows from the definition of $j$ that $z\in\repets(S)$ and $m>1$.

By \Cref{p:weak-vs-strong}, if $S$ is not weakly circular, it is not strongly circular.
Thus, we assume that $S$ is weakly circular. For all $i\in\nn$ large enough, the word $x^{im} = \varphi(z^i)$ must be weakly synchronized. It follows that there are words $s,t\in\A^*$ such that $x=st$ and $(s(ts)^{j_1},(ts)^{j_2}t)$ is a weakly synchronizing pair with $j_1+j_2 = im-1$. Let $\bar x = ts$. It follows from \Cref{l:extend-sync} that $(s\bar x^{j_1+k_1},\bar x^{j_1+k_2}t)$ is also weakly synchronizing for all $k_1,k_2\in\nn$. Thus, we may assume that $j_1,j_2$ are both arbitrarily large and, in particular, $j_1 \geq m$.

Since $(\emptyw,z^i,\emptyw)$ and $(x,z^{i+1},x^{m-1})$ are two interpretations of $x^{im}$, there are pairs $(z_1,z_2)$ and $(z_3,z_4)$ such that $z = z_1z_2 = z_3z_4$ and 
\begin{align*}
	\varphi\left(z_1(z_2z_1)^{\ell_1} \right) &= \phantom{x}s\bar{x}^{j_1},&\quad \varphi\left((z_2z_1)^{\ell_2}z_1\right ) &= \bar{x}^{j_2}t, \\
	\varphi\left(z_3(z_4z_3)^{\ell_1} \right) &= xs\bar{x}^{j_1},&\quad\varphi\left((z_4z_3)^{\ell_2+1}z_4 \right) &= \bar{x}^{j_2}tx^{m-1}
\end{align*}
for some $\ell_1,\ell_2\in\nn$ such that $\ell_1+\ell_2 = i-1$.
As $|s\bar{x}^{j_1}| = \left | \varphi\left(z_1(z_2z_1)^{\ell_1} \right) \right | = \left | \varphi\left(z_3(z_4z_3)^{\ell_1} \right) \right | - |\bar{x}|$ and $z_1z_2 = z_3z_4$, we have $|z_1|<|z_3|$.
Since $z$ is primitive, it follows that $z_2z_1\neq z_4z_3$.
Let $q$ be the longest common suffix of $z_2z_1$ and $z_4z_3$. 
Then $s\bar{x}^{j_1}\varphi(q)^{-1}$ is well-defined due to $j_1 \geq m$, and $(s\bar{x}^{j_1}\varphi(q)^{-1}, \varphi(q)\bar{x}^{j_2}t)$ is compatible with both interpretations $(\emptyw,z^i,\emptyw)$ and $(x,z^{i+1},x^{m-1})$ with
\[
	\varphi(z_1(z_2z_1)^{\ell_1}q^{-1}) = s\bar{x}^{j_1}\varphi(q)^{-1}, \varphi(z_3(z_4z_3)^{\ell_1}q^{-1}) = xs\bar{x}^{j_1}\varphi(q)^{-1}.
\]
By the definition of $q$, the last letters of $z_1(z_2z_1)^{\ell_1}q^{-1}$ and $z_3(z_4z_3)^{\ell_1}q^{-1}$ are distinct. This shows that $(s\bar{x}^{j_1}\varphi(q)^{-1}, \varphi(q)\bar{x}^{j_2}t)$ is admissible but not strongly synchronizing. Since $j_1,j_2$ can be arbitrarily large, we conclude that $S$ is not strongly circular.
\qed

\clearpage

\bibliographystyle{splncs04}
\bibliography{biblio.bib}

\begin{thebibliography}{10}
\providecommand{\url}[1]{\texttt{#1}}
\providecommand{\urlprefix}{URL }
\providecommand{\doi}[1]{https://doi.org/#1}

\bibitem{Ca94}
Cassaigne, J.: An algorithm to test if a given circular {HD0L}-language avoids a pattern. In: Pehrson, B., Simon, I. (eds.) Proceedings of the IFIP 13th World Computer Congress. pp. 459--464. Elsevier (1994)

\bibitem{Ca}
Cassaigne, J.: Complexit{\'{e}} et facteurs sp{\'{e}}ciaux. Bull. Bel. Math. Soc. Simon Stevin  \textbf{4}(1),  67--88 (1997). \doi{10.36045/bbms/1105730624}

\bibitem{Ehrenfeucht1978a}
Ehrenfeucht, A., Rozenberg, G.: Simplifications of homomorphisms. Inform. Control  \textbf{38}(3),  298--309 (1978). \doi{10.1016/s0019-9958(78)90095-5}

\bibitem{Frid1999}
Frid, A., Avgustinovich, S.V.: On bispecial words and subword complexity of {D0L} sequences. In: Sequences and their Applications. pp. 191--204. Springer London (1999). \doi{10.1007/978-1-4471-0551-0_13}

\bibitem{KlSt13}
Klouda, K., Starosta, {\v{S}}.: An algorithm for enumerating all infinite repetitions in a {D0L}-system. J. Discrete Algorithms  \textbf{33},  130--138 (2015). \doi{https://doi.org/10.1016/j.jda.2015.03.006}

\bibitem{Klouda24}
Klouda, K., Starosta, {\v{S}}.: The number of primitive words of unbounded exponent in the language of an {HD0L}-system is finite. Journal of Combinatorial Theory, Series A  \textbf{206},  105904 (2024). \doi{10.1016/j.jcta.2024.105904}

\bibitem{Klouda2019}
Klouda, K., Starosta, {\v{S}}.: Characterization of circular {D0L}-systems. Theoret. Comput. Sci.  \textbf{790},  131--137 (2019). \doi{10.1016/j.tcs.2019.04.021}

\bibitem{th/Kyriakoglou2019}
Kyriakoglou, R.: Iterated morphisms, combinatorics on words and symbolic dynamical systems. Ph.D. thesis, Universit{\'{e}} Paris-Est (2019), advisor: Dominique Perrin.

\bibitem{MiSe}
Mignosi, F., S{\'{e}}{\'{e}}bold, P.: If a {D0L} language is {$k$}-power free then it is circular. In: Automata, languages and Programming. Lecture note in Computer Science, vol.~700, pp. 507--518. Springer Berlin Heidelberg (1993). \doi{10.1007/3-540-56939-1_98}

\bibitem{Mosse1992}
Moss{\'{e}}, B.: Puissance de mots et reconnaissabilit{\'{e}} des points fixes d'une substitution. Theor. Comput. Sci.  \textbf{99}(2),  327--334 (1992). \doi{10.1016/0304-3975(92)90357-L}

\bibitem{RoSa80}
Rozenberg, G., Salomaa, A.: The mathematical theory of {L} systems. Academic Press (1980)

\bibitem{Salo2017}
Salo, V.: Decidability and universality of quasiminimal subshifts. J. Comput. Syst. Sci.  \textbf{89},  288--314 (2017). \doi{10.1016/j.jcss.2017.05.017}

\end{thebibliography}

\clearpage

\appendix

\section{Pseudocode}
\label{s:pseudocode}

\lstinputlisting[basicstyle=\small,label={lst:inter},caption={Computes minimal interpretations of words of length $n$ in the language of a PDF0L system.}]{interpretations.py}
\lstinputlisting[basicstyle=\small,label={lst:weak-sync},caption={Computes the weak circularity threshold in a weakly circular PDF0L system.}]{weak_sync.py}
\lstinputlisting[basicstyle=\small,label={lst:strong-sync},caption={Computes the strong circularity threshold in a strongly circular PDF0L system.}]{strong_sync.py}

\end{document}